\theoremstyle{plain}
\newtheorem{theorem}{Theorem}[section]
\newtheorem{proposition}[theorem]{Proposition}
\theoremstyle{definition}
\newtheorem{algorithm}[theorem]{Algorithm}
\newtheorem{example}[theorem]{Example}
\theoremstyle{remark}
\newtheorem*{remarks}{Remarks}
\newtheorem*{remark}{Remark}
\newcommand{\ev}{{\mathord \mathrm{E}}}%
\newcommand{\prob}{{\mathord P}}%
\DeclareMathOperator{\id}{\mathit{id}}%
\DeclareMathOperator*{\var}{Var}%
\DeclareMathOperator*{\diag}{diag}%
\newcommand{\pto}{\mathchoice
	{\raisebox{.0em}{ $\overset{\mathrm{P}}{\to}$ }}
	{\raisebox{-.15em}{ $\overset{\raisebox{-.25em}{\scriptsize$\mathrm{P}$}}{\to}$ }}
	{}
	{}}
\newcommand{\wto}[1]{\mathchoice
	{\raisebox{.0em}{ $\overset{#1}{\leadsto}$ }}
	{\raisebox{-.15em}{ $\overset{\raisebox{-.25em}{\scriptsize$#1$}}{\leadsto}$ }}
	{}
	{}}
\newcommand{\xqed}[1]{%
  \leavevmode\unskip\penalty9999 \hbox{}\nobreak\hfill
  \quad\hbox{\ensuremath{#1}}}
\newcommand{\sqed}{\xqed{\square}}
\newcommand{\Sym}{\mathfrak{G}}
\newcommand{\G}{\mathfrak{G}}
\newcommand\mydots{\hbox to 1em{.\hss.\hss.}}
\newcommand{\ubar}[1]{\text{\b{$#1$}}}
\newcommand{\balpha}{\bar{\alpha}}
\newcommand{\smx}{\triangledown}
\definecolor{umorange}{HTML}{cc6600}
\definecolor{umblue}{HTML}{587abc}
\definecolor{umgrey}{HTML}{989c97}
\definecolor{umred}{HTML}{7a121c}
\definecolor{umgreen}{HTML}{83b2a8}
\begin{document}

\title[Inference with a single treated cluster]{Inference with a single treated cluster}
\author[A.~Hagemann]{Andreas Hagemann
}
\address{Department of Economics, University of Michigan, 611 Tappan Ave, Ann Arbor, MI 48109, USA. Tel.: +1 (734) 764-2355. Fax: +1 (734) 764-2769}
\email{\href{mailto:hagem@umich.edu}{hagem@umich.edu}}
\urladdr{\href{https://umich.edu/~hagem}{umich.edu/~hagem}}
\date{\today. (First version on arXiv: October 9, 2020.)
}
\thanks{All errors are my own. Comments are welcome. I would like to thank Sarah Miller for useful discussions.
}

\begin{abstract}
I introduce a generic method for inference about a scalar parameter in research designs with a finite number of heterogeneous clusters where only a single cluster received treatment. This situation is commonplace in difference-in-differences estimation but the test developed here applies more generally. I show that the test controls size and has power under asymptotics where the number of observations within each cluster is large but the number of clusters is fixed. The test combines weighted, approximately Gaussian parameter estimates with a rearrangement procedure to obtain its critical values. The weights needed for most empirically relevant situations are tabulated in the paper. Calculation of the critical values is computationally simple and does not require simulation or resampling. The rearrangement test is highly robust to situations where some clusters are much more variable than others. Examples and an empirical application are provided. 
\vskip 1em \noindent
\emph{JEL classification}: C01, C22, C32\\ 
\emph{Keywords}: cluster-robust inference, difference in differences, two-way fixed effects, clustered data, dependence, heterogeneity
\end{abstract}

\maketitle\renewcommand{\bfseries}{\fontseries{b}\selectfont}

\section{Introduction}

Inference about the average effect of a binary treatment or policy intervention is often much more challenging than its estimation. For example, calculating a difference-in-differences estimate can be as simple as comparing the difference in average outcomes of individuals in a group before and after an intervention to the same differences in unaffected groups. The main challenge for inference is that individuals within each of these groups likely depend on one another in unobservable ways. Taking this dependence into account generally requires knowledge of an explicit ordering of the dependence structure within each group. While time-dependent data have a natural ordering, it may be difficult or impossible to credibly order cross-sectionally dependent data within states or villages. Researchers commonly try to sidestep this problem by splitting large groups into smaller clusters that are presumed to be independent in order to have access to standard inferential procedures based on cluster-robust standard errors or the bootstrap. Splitting states, villages, or other large groups into smaller clusters is often difficult to justify but necessary for most of the available inferential procedures because they achieve consistency by requiring the number of clusters to go to infinity. If a procedure is valid with a fixed number of clusters, it typically requires at least two treated clusters unless strong homogeneity conditions are satisfied. 
Numerical evidence by \citet{bertrandetal2004}, \citet{mackinnonwebb2014}, and others suggests that ignoring dependence and heterogeneity may lead to heavily distorted inference in empirically relevant situations. In both cases, the actual size of the test can exceed its nominal level by several orders of magnitude, i.e., nonexistent effects are far too likely to show up as highly significant.

In this paper, I introduce an asymptotically valid method for inference with a single treated cluster that allows for heterogeneity of unknown form. The number of observations within each cluster is presumed to be large but the total number of clusters is fixed. The method, which I refer to as a \emph{rearrangement test}, applies to standard difference-in-differences estimation and other settings where treatment occurs in a single cluster and the treatment effect is identified by between-cluster comparisons. The key theoretical insight for the rearrangement test is that a mild restriction on some but not all of the heterogeneity in two samples of independent normal variables allows testing the equality of their means even if one sample consists of only a single observation. I prove that this is possible for empirically relevant levels of significance if the other sample consists of at least ten observations. The rearrangement test compares the data to a reordered version of itself after attaching a special weight to the sample with a single observation. The weights needed for most standard situations are tabulated in the paper and calculating additional weights is computationally simple. I also show that the weights remain approximately valid if the two samples of independent heterogeneous normal variables arise as a distributional limit. I exploit this result in the context of cluster-robust inference by constructing asymptotically normal cluster-level statistics to which the rearrangement test can be applied. The resulting test is consistent against all fixed alternatives to the null, powerful against $1/\sqrt{n}$ local alternatives, and does not require simulation or resampling.

Inference based on cluster-level estimates goes back at least to \citet{famamacbeth1973}. Their approach is generalized and formally justified by \citet{ibragimovmueller2010, ibragimovmueller2016}, who construct $t$ statistics from cluster-level estimates and show that these statistics can be compared to Student $t$ critical values. \citet{canayetal2014} obtain null distributions by permuting the signs of cluster-level statistics under symmetry assumptions. \citet{hagemann2019b} permutes cluster-level statistics directly but adjusts inference to control for the potential lack of exchangeability. All of these methods allow for a fixed number of large and heterogeneous clusters but require several treated clusters. The rearrangement test complements these methods because it relies on the same type of high-level condition on the cluster-level statistics but is valid with a single treated cluster. 
Other methods that are valid with a fixed number of clusters are the tests of \citet{besteretal2014} and a cluster-robust version of the wild bootstrap \citep[see, e.g,][]{cameronetal2008, djogbenouetal2019} analyzed by \citet{canayetal2018}. However, these papers rely on strong homogeneity conditions across clusters that are not needed here.

Several approaches for inference have been developed specifically for difference-in-differences estimation. \citet{conleytaber2011} provide a method that is valid with a single treated cluster and infinitely many control clusters under strong independence and homogeneity conditions that justify an exchangeability argument. \citet{fermanpinto2019} extend this approach to situations where the form of heteroskedasticity is known exactly. Another extension by \citet{ferman2020} allows for spatial correlation while maintaining \citeauthor{conleytaber2011}'s exchangeability condition. The rearrangement test differs from these methods because it is not limited to models estimated by difference in differences, does not rely on exchangeability conditions, and allows for completely unknown forms of heterogeneity. Other approaches due to \citet{mackinnonwebb2019b, mackinnonwebb2019a} use randomization (permutation) inference for difference-in-differences estimation and other models with few treated clusters. They test ``sharp'' \citep{fisher1935} nulls under randomization hypotheses and asymptotics where the number of clusters is eventually infinite. In contrast, the present paper is able to test conventional nulls in a setting with finitely many clusters.
 

The remainder of the paper is organized as follows: Section~\ref{s:normal} proves several new results on normal random vectors with independent, heterogeneous entries after a specific transformation and introduces the rearrangement test. Section~\ref{s:singleclust} establishes the asymptotic validity of the test in the presence of finitely many heterogeneous clusters when only one cluster received treatment and discusses several examples. Section~\ref{s:montecarlo} illustrates the finite sample behavior of the new test in simulations and in data used by \citet{garthwaiteetal2014}, who analyze the effects of a large-scale disruption of public health insurance in Tennessee. Section~\ref{s:conc} concludes. The appendix contains auxiliary results and proofs.

I will use the following notation. $1\{A\}$ is an indicator function that equals one if $A$ is true and equals zero otherwise. Limits are as $n\to\infty$ unless noted otherwise and $\leadsto$ denotes convergence in distribution.

\section{Inference with heterogenous normal variables}\label{s:normal}
In this section, I construct a test for the equality of means of two samples of independent heterogeneous normal variables where one sample consists of only a single observation. The other sample has finitely many observations. I show that the test has power while controlling size (Theorem \ref{t:size}) and remains approximately valid if this two-sample problem characterizes the large sample distribution of a random vector of interest (Proposition \ref{t:approx}).

Consider $q$ independent variables $X_{0,1},\dots, X_{0,q}$ with $X_{0,k}\sim N(\mu_0,\sigma_k^2)$ for $1\leq k\leq q$. Independently, there is an additional variable $X_1 \sim N(\mu_1, \sigma^2)$. I interpret this as a two-sample problem with ``control'' sample $X_{0,1},\dots, X_{0,q}$ and ``treatment'' sample $X_1$, although all of the following still applies if these roles are reversed. The objective is to test the null hypothesis of equality of means,
\begin{equation*}
	H_0\colon \mu_1 = \mu_0,
\end{equation*}
without knowledge of $\mu_0, \sigma, \sigma_1,\dots,\sigma_q$ and without assuming that these quantities can be consistently estimated. I account for the uncertainty about $\mu_0$ by recentering the data $X = (X_1, X_{0,1},\dots, X_{0,q})$ with $\bar{X}_0 = q^{-1}\sum_{k=1}^q X_{0,k}$ to define
\begin{equation}\label{eq:xdef}
S(X,w) = \bigl((1+w)(X_1-\bar{X}_0),(1-w)(X_1-\bar{X}_0), X_{0,1} - \bar{X}_0,\dots, X_{0,q} - \bar{X}_0\bigr)
\end{equation}
for some known weight $w\in(0,1)$ that will be chosen shortly. If $X_1-\bar{X}_0 >0$, the $1+w$ increases $X_1-\bar{X}_0$ and $1-w$ decreases $X_1-\bar{X}_0$. If $X_1-\bar{X}_0 < 0$, these effects are reversed. The idea underlying the test is that if the decreased version of $X_1-\bar{X}_0$ is still large in comparison to $X_{0,1} - \bar{X}_0,\dots, X_{0,q} - \bar{X}_0$, then this size difference is unlikely to be only due to heterogeneity in $\sigma^2, \sigma_1^2,\dots,\sigma_q^2$ but provides evidence that $\mu_1$ and $\mu_0$ are in fact not equal. I show below that $w$ gives precise probabilistic control over this comparison. In particular, choosing $w$ appropriately allows me to construct a test whose size can be bounded at a predetermined significance level.

Before defining the test statistic, I first introduce some notation. For a given vector $s\in\mathbb{R}^d$, let $s_{(1)}\leq\cdots\leq  s_{(d)}$ be the ordered entries of $s$. Denote by $s\mapsto s^\smx = (s_{(d)},\dots,  s_{(1)})$ the operation of rearranging the components of $s$ from largest to smallest. 
The test uses $S(X,w)$ and its rearranged version $S(X,w)^\smx$ in the difference-of-means statistic
\begin{equation}\label{eq:tdef}
	s = (s_1,\dots, s_{q+2}) \mapsto T(s) = \frac{s_1 + s_2}{2} -  \frac{1}{q}\sum_{k=1}^{q} s_{k+2}
\end{equation}
to define the test function 
\begin{equation}\label{eq:phidef}
\varphi (X, w) = 1\bigl\{ T\bigl(S(X,w)\bigr) = T\bigl(S(X,w)^\smx\bigr) \bigr\}.	
\end{equation}
The test, which I refer to as \emph{rearrangement test}, rejects if $\varphi (X,w) = 1$ and does not reject otherwise. As stated, the test is against the alternative of a positive treatment effect, $H_1\colon \mu_1 > \mu_0$. For a test against $H_1\colon \mu_1 < \mu_0$, simply use  $\varphi (-X, w)$. These alternatives can be combined to provide a two-sided test. I describe the exact implementation below equation \eqref{eq:sizealpha} ahead. Also note that the first difference of means in \eqref{eq:phidef} simplifies to $T(S(X,w))=X_1-\bar{X}_0$ but $T(S(X,w)^\smx)$ is in general a complicated function of $w$. 

Intuitively, the rearrangement test can be interpreted as a permutation test that treats $S = S(X,w)$ as if it were the data and uses the second largest permutation statistic of $T(S)$ as critical value $c$. If $T(S) > c$, then the only possibility left is that $T(S)$ equals its largest permutation statistic. For the difference of means $T(S)$, that statistic must be $T(S^\smx)$ and therefore $T(S) > c$ is equivalent to $\varphi (X,w) = 1$. Because $S$ is being permuted and not $X$, this also explains why it is sensible to write $T(S(X,w))$ instead of $X_1-\bar{X}_0$ in the definition of the test function \eqref{eq:phidef}. A classical permutation test would then use an exchangeability condition on $S$ to determine the size of the test. Even though the $S$ constructed here is far from exchangeable, I will show that this test has power while controlling size at a predetermined level. Instead of relying on exchangeability, the results here depend on the joint normality of $X$ combined with the location and scale invariance property $\varphi (X,w) = \varphi ((X-\mu_0 1_{q+1})/\sigma,w)$, where $1_{q+1}$ is a ($q+1$)-vector of ones. The location invariance is forced by the recentering of $X$ with $\bar{X}_0$ and effectively removes $\mu_0$ from the list of nuisance quantities. The scale invariance is ensured by the specific choices of $T$ and $\varphi$. It reduces the dimensionless unknowns $\sigma, \sigma_1,\dots,\sigma_q$ to the more tractable ratios $\sigma_1/\sigma,\dots,\sigma_q/\sigma$.


I start with the analysis of size and power, and connect these results with the situation where $X = (X_1, X_{0,1},\dots, X_{0,q})$ is an asymptotic approximation later on. I assume that the variances $\sigma_k^2$ of the $X_{0,k}$, $1\leq k\leq q$, are bounded away from zero by some $\ubar{\sigma}^2>0$ for all but one $k$. This avoids a trivial and in practice easily recognizable situation where some of the $X_{0,k}$ are exactly equal. I also restrict the variance $\sigma^2$ of $X_1$ to be bounded above by some $\bar{\sigma}^2<\infty$ because letting $\sigma\to\infty$ in $\varphi(X,w)$ would have the same effect as setting all $\sigma_k^2$ equal to zero. Under the null hypothesis, the distribution of $\varphi (X,w)$ is then determined by the unknown value of \[\lambda \in \Lambda \coloneqq \{ (\mu_0, \sigma, \sigma_1,\dots, \sigma_q) \in\mathbb{R}\times (0,\infty)^{q+1} : \sigma \leq \bar{\sigma} \text{ and } \sigma_k \geq \ubar{\sigma} \text{ for all $k$ but one} \}.\] Under the alternative, the distribution of $\varphi (X,w)$ also depends on the treatment effect $\delta = \mu_1 - \mu_0$. I write $\ev_{\lambda, \delta}$ and $\prob_{\lambda, \delta}$ to emphasize this dependence but occasionally drop subscripts to prevent clutter. 

My strategy is to first bound the null rejection probability $\ev_{\lambda, 0} \varphi (X,w)$ uniformly in $\lambda\in\Lambda$ by a smooth function of the weight $w$. I can then find a $w$ to make the bound exactly equal to the desired significance level to guarantee size control. The bound is also a function of the number of control observations $q$ and the maximal relative heterogeneity $\varrho = \bar{\sigma}/\ubar{\sigma}$ of treated and untreated observations. The parameter $\varrho$ is user chosen and has a simple interpretation: it restricts how much more variable $X_1$ can be relative to the $X_{0,k}$ when one of the $\sigma_k$ equals zero and the remaining $\sigma_k$ are all equal to the lower limit $\ubar{\sigma}$. This is the worst-case scenario for the test because $X_1$ is then likely to be very large on accident in comparison to the $X_{0,k}$. In that scenario, a $\varrho$ of 5 simply means that the variance of $X_1$ can be up to $5^2 = 25$ times larger than the variances of all but one of the $X_{0,k}$ and ``infinitely more variable'' than the remaining $X_{0,k}$. 
There are no restrictions on how much \emph{less} variable $X_1$ can be than $X_{0,1},\dots,X_{0,q}$ and, in particular, $\bar{\sigma}/\ubar{\sigma}$ can be less than one.

The following theorem is the main theoretical result of the paper. It establishes the existence of a size bound that is valid for a fixed number of control observations $q$ and fully accounts for the uncertainty about the parameters in $\Lambda$. The theorem also shows that the test has power against the alternative $H_1\colon \mu_1 > \mu_0$. Results in the other direction follow by considering $\ev_{\lambda,-\delta}\varphi(-X, w)$ instead of $\ev_{\lambda,\delta}\varphi(X, w)$. The discussion immediately below focuses on the implications of the theorem. I address some of its technical aspects towards the end of this section. Let $\Phi$ and $\phi$ denote the normal distribution and density functions, respectively. 
\begin{theorem}[Size and power]\label{t:size}
Let $X_1, X_{0,1,}\dots, X_{0,q}$ be independent with $X_1\sim N(\mu_0 + \delta, \sigma^2)$ and $X_{0,k}\sim N(\mu_0, \sigma_k^2)$ for $1\leq k\leq q$. If $\delta = 0$, then for all $w\in (0,1)$,
\begin{align} 
\sup_{\lambda \in \Lambda} \ev_{\lambda, 0} \varphi (X, w) \leq \xi_q(w,\varrho) \coloneqq \frac{1}{2^{q+1}} + &\int_{0}^{\infty} \Phi\bigl((1-w)  \varrho y\bigr)^{q-1} \phi(y) dy \label{eq:xidef} \\ &~+ \min_{t > 0} \biggl ( \Phi\Bigl(\sqrt{q-1} w t \Bigr)^{q-1} + 2\Phi(- qt ) \biggr). \nonumber
\end{align}
Furthermore, for every $\lambda \in \Lambda$ and $w\in (0,1)$, we have $\lim_{\delta \to \infty}\ev_{\lambda,\delta} \varphi (X, w)= 1$ and $\lim_{\delta \to \infty}\ev_{\lambda, \delta} \varphi (X, 1)= 0$.
\end{theorem}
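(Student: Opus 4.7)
The plan is to analyze the three claims separately after exploiting the location-scale invariance $\varphi(X,w) = \varphi((X-\mu_0 1_{q+1})/\sigma, w)$ noted in the text. This reduces the size problem to the normalized case $\mu_0 = 0$, $\sigma = 1$, with $X_1 \sim N(0,1)$ and $X_{0,k} \sim N(0, \tau_k^2)$, where $\tau_k = \sigma_k/\sigma$ satisfies $\tau_k \geq 1/\varrho$ for all but at most one index $k^*$. Unwinding the definitions of $S$, $T$, and $S^\smx$ shows that the rejection event is almost surely the event that $s_1$ and $s_2$ are the two largest coordinates of $S(X,w)$, equivalently
\[
(1-w)(X_1 - \bar X_0) > \max_{1 \leq k \leq q}(X_{0,k} - \bar X_0)
\]
(the companion inequality involving $1+w$ follows automatically on this event). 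The size task thus becomes a uniform upper bound on this Gaussian max-exceedance probability over all admissible $\tau$.

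For the size bound, I would split the event along two cuts: a truncation of $X_1 - \bar X_0$ at a free level $t > 0$ and the identification of the single index $k^*$ whose variance can be arbitrarily small. On the truncated part $\{X_1 - \bar X_0 \leq t\}$, conditioning on $X_1$ and $\bar X_0$ and exploiting independence of the $X_{0,k}$'s from $X_1$ factors the probability; after a Gaussian tail estimate the $q-1$ non-special coordinates produce a factor $\Phi(\sqrt{q-1}\,wt)^{q-1}$, with the $\sqrt{q-1}$ coming from combining a $\bar X_0$ standard deviation with the $q-1$ control variances bounded below by $1/\varrho$, while the one possibly degenerate coordinate $k^*$ contributes $2\Phi(-qt)$. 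Minimizing over $t$ delivers the third summand of $\xi_q(w,\varrho)$. On the complementary event $\{X_1 - \bar X_0 > t\}$, a monotonicity comparison in the variance parameters reduces to the extremal configuration $\tau_{k^*} = 0$ and $\tau_k = 1/\varrho$ for $k \neq k^*$, where a direct Gaussian computation yields $\int_0^\infty \Phi((1-w)\varrho y)^{q-1}\phi(y)\,dy$. A residual sign-coincidence configuration not absorbed by either cut is controlled by the crude bound $2^{-(q+1)}$.

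For the power statement with $w \in (0,1)$: writing $X_1 = \mu_0 + \delta + \sigma Z_1$ for a standard normal $Z_1$ gives $X_1 - \bar X_0 = \delta + \Op(1)$, so $(1-w)(X_1 - \bar X_0) \to \infty$ almost surely as $\delta \to \infty$ while $\max_k(X_{0,k} - \bar X_0)$ stays tight; the rejection inequality is therefore eventually satisfied with probability tending to one. For $w = 1$: the vector $S(X,1) = (2(X_1-\bar X_0), 0, X_{0,1}-\bar X_0, \dots, X_{0,q}-\bar X_0)$ has a hard zero in its second slot, so rejection would require $\{s_1, 0\}$ to be the two largest entries of $S$, forcing $0 \geq \max_k(X_{0,k}-\bar X_0)$; but $\sum_k(X_{0,k}-\bar X_0)=0$ makes this maximum nonnegative, with strict positivity off a null set, so $\ev_{\lambda,\delta}\varphi(X,1) = 0$ exactly for every $\delta$ and the limit is trivially zero.

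The delicate part is the uniform size argument: the variance vector is partly unbounded and partly collapsible, so the truncation level, the extremal variance configuration, and the residual sign case must be chosen so that the three pieces tile the rejection event without double-counting and without gaps. Carrying out the Gaussian integration in the correct coordinates after the variance-monotonicity step is the main technical obstacle.
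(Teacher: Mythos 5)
Your reduction of the rejection event to $(1-w)(X_1-\bar X_0) > \max_k (X_{0,k}-\bar X_0)$ is valid (since the centered controls sum to zero, their maximum is nonnegative, so the $(1+w)$ inequality is indeed implied), and your treatment of both power claims is sound; for $w=1$ you even get the stronger exact statement $\ev_{\lambda,\delta}\varphi(X,1)=0$ for every $\delta$, which is cleaner than the paper's Slutsky argument. The problem is the size bound, which is the heart of the theorem.

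Your decomposition does not produce $\xi_q(w,\varrho)$, for three concrete reasons. First, the factorization step fails: on $\{X_1-\bar X_0\leq t\}$ you propose to condition on $X_1$ and $\bar X_0$ and factor over the controls, but conditioning on $\bar X_0$ destroys the mutual independence of $X_{0,1},\dots,X_{0,q}$, so the product form $\Phi(\cdot)^{q-1}$ does not follow; relatedly, $\max_k(X_{0,k}-\bar X_0)$ is a maximum of \emph{dependent} variables and you never say how that dependence is removed. Second, the structure of the bound is wrong for your tiling: if both pieces of the rejection event are cut at the same level $t$, the minimization over $t$ must be taken over the \emph{sum} of all $t$-dependent terms, whereas in \eqref{eq:xidef} the integral is $t$-free and $\min_t$ applies only to the third summand; moreover a cut at $\{X_1-\bar X_0\leq t\}$ naturally yields factors in $(1-w)t$, not the $\Phi(\sqrt{q-1}\,w t)^{q-1}$ with a bare $w$ that actually appears. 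Third, once you have used $\max_k(X_{0,k}-\bar X_0)\geq 0$ to drop the $(1+w)$ inequality, the rejection event already forces $X_1-\bar X_0>0$, so there is no residual ``sign-coincidence'' configuration left over to absorb into $2^{-(q+1)}$. The paper's mechanism is different: it compares $\varphi$ to an \emph{oracle} test $\tilde\varphi$ in which $\bar X_0$ is replaced by $\mu_0$. The term $2^{-(q+1)}$ is $\prob(X_1<0,\,X_{0,(q)}<0)$ in the oracle problem (where the uncentered controls \emph{can} all be negative), the integral bounds the oracle rejection probability on $\{X_1>0\}$ via monotonicity in the variances, and the $\min_t$ term bounds $|\ev\varphi-\ev\tilde\varphi|$ by combining a tail bound $\prob(|\bar X_0|>st)=2\Phi(-qt)$ with an anti-concentration bound $\prob(|X_{1,(1)}-X_{0,(q)}|\leq wst)\leq\Phi(\sqrt{q-1}\,wt)^{q-1}$ — this is where the factor $w$ (the gap between the $(1\pm w)$-scaled copies) enters. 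That oracle-comparison idea is the missing ingredient in your proposal.
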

The theorem implies that the rearrangement test controls size, i.e., \[ \sup_{\lambda \in \Lambda} \ev_{\lambda, 0} \varphi (X, w) \leq \alpha, \] whenever $q$, $w$, and $\varrho$ are such that $\xi_q(w,\varrho) \leq \alpha$ for the desired significance level $\alpha$. 
The bound $\xi_q(w,\varrho)$ has several properties that make this possible. In particular, it is monotonically increasing in $\varrho$ and decreasing in $q$. The reason for the monotonicity is that if $X_1$ can be more variable than $X_{0,1},\dots,X_{0,q}$, then the burden of proof to show ``$\mu_1 > \mu_0$'' as opposed to ``$\mu_1 = \mu_0$ with a large realization of $X_1$'' becomes necessarily higher. A large $q$ can ameliorate this effect somewhat because it removes uncertainty about $\mu_0$. The bound also tends to be decreasing in $w \in [0,1]$ because the integral generally dominates the other components,  but can increase slightly in some situations. 
\begin{figure}
\centering
\resizebox{.85\textwidth}{!}{
\input{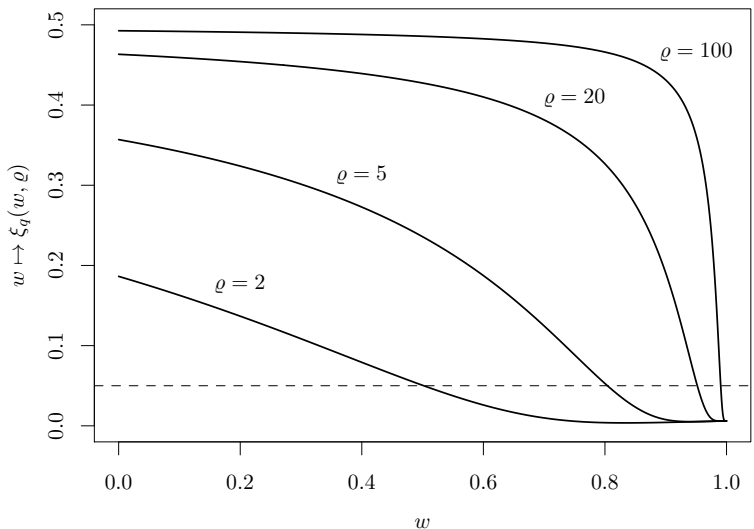}
}
\caption{Solid lines show the size bound $\xi_q(w, \varrho)$ at $q = 20$ control observations as a function of the weight $w$ for different values of the maximal heterogeneity~$\varrho$. The dashed line equals $.05$.}\label{f:fig1}
\end{figure}
This is illustrated in Figure~\ref{f:fig1}, where $w\mapsto \xi_q(w,\varrho)$ (solid lines) is essentially decreasing over the entire domain except for $\varrho=2$ and $w\geq .85$. Most importantly, it can be seen that $w\mapsto \xi_q(w,\varrho)$ decreases enough to dip below the desired significance level $\alpha = .05$ (dashed line) for all values of $\varrho$. As $q$ increases (not shown), $w\mapsto \xi_q(w,\varrho)$ is pushed towards zero but the shape of the function does not change meaningfully with $q$. The $w$ at which $\xi_q(w,\varrho)=\alpha$ is generally unique for most empirically relevant $\alpha$ and does not exist in some extreme situations. This can be seen in Figure~\ref{f:fig1}, where $w\mapsto \xi_q(w,\varrho)$ crosses $\alpha = .05$ only once for each $\varrho$ but, for example, $\xi_q(w,\varrho) = .6$ is never attained. 

Theorem~\ref{t:size} also provides information about the interplay between $w$ and the test under the alternative. In particular, it shows that the rearrangement test has power against $H_1 : \mu_1 > \mu_0$ for every $w\in (0,1)$ but the power declines sharply at $w=1$. I therefore explore the behavior of the test with $w$ near $1$ further in the following result. It provides a lower bound on the power of the test for fixed $\delta$. 
\begin{proposition}[Lower bound on power]\label{p:lowerbound} Let $X_1, X_{0,1},\dots, X_{0,q}$ be independent with $X_1\sim N(\mu_0 + \delta, \sigma^2)$ and $X_{0,k}\sim N(\mu_0, \sigma_k^2)$ for $1\leq k\leq q$. For every $w\in (0,1)$, $\sigma, \sigma_1,\dots, \sigma_q > 0$, and $\delta > 0$,
\[ \inf_{\mu_0\in\mathbb{R}}\ev_{\lambda, \delta} \varphi (X, w) \geq  2^q\sup_{t \geq 0}\Phi\biggl(\frac{\delta}{\sigma} - \frac{1+w}{1-w}t\biggr)\prod_{k=1}^q\Biggl(\Phi\biggl(\frac{\sigma}{\sigma_k}t\biggr) - 0.5\Biggr) \] The supremum is attained on $t\in (0,\infty)$. The right-hand side is strictly positive and converges to $1$ as $\delta \to \infty$.	
\end{proposition}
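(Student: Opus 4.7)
The plan is to reduce the rejection event $\{\varphi(X,w) = 1\}$ to an explicit scalar inequality involving $X_1$, the sample mean $\bar{X}_0$, and $\max_k X_{0,k}$, and then bound that event from below by the intersection of events on $X_1$ and on the $X_{0,k}$ individually, which are mutually independent and whose probabilities therefore factor.

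First I would simplify the rejection event. The statistic $T$ depends on its argument only through which two coordinates occupy positions $1$ and $2$, so $T(S(X,w)^{\smx})$ equals the maximum of $T$ over all permutations of $S(X,w)$. Hence $\varphi(X,w) = 1$ is, up to a null set, the event that the two largest entries of $S(X,w)$ already lie in its first two positions. Those first two entries are $(1+w)(X_1 - \bar{X}_0)$ and $(1-w)(X_1 - \bar{X}_0)$. Using that $1-w>0$ and that $\max_k (X_{0,k} - \bar{X}_0) \geq 0$, a short case analysis in the sign of $X_1 - \bar{X}_0$ reduces this to $(1-w)(X_1 - \bar{X}_0) \geq \max_k (X_{0,k} - \bar{X}_0)$, which in turn rearranges to
\[ (1-w) X_1 + w \bar{X}_0 \geq \max_{1 \leq k \leq q} X_{0,k}. \]
This reformulation is the key insight; everything else is a calculation.

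Next, by the location invariance $\varphi(X,w) = \varphi((X - \mu_0 1_{q+1})/\sigma, w)$, the probability $\ev_{\lambda,\delta}\varphi(X,w)$ does not depend on $\mu_0$, so I may set $\mu_0 = 0$. For each $t \geq 0$ I would establish the inclusion
\[ \bigl\{ X_1 \geq \sigma (1+w) t/(1-w) \bigr\} \cap \bigcap_{k=1}^{q} \bigl\{ |X_{0,k}| \leq \sigma t \bigr\} \;\subseteq\; \bigl\{ (1-w)X_1 + w \bar{X}_0 \geq \max_k X_{0,k} \bigr\}. \]
On the left event, $\max_k X_{0,k} \leq \sigma t$ and $|\bar{X}_0| \leq \sigma t$, so
\[ (1-w)X_1 + w \bar{X}_0 \geq (1+w)\sigma t - w \sigma t = \sigma t \geq \max_k X_{0,k}, \]
giving the inclusion. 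This is the only point where the weight $w$ enters in a nontrivial way.

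Because $X_1$ is independent of $(X_{0,1},\dots,X_{0,q})$ and the $X_{0,k}$ are mutually independent, the probability of the left event factors as
\[ \Phi\bigl(\delta/\sigma - (1+w)t/(1-w)\bigr) \prod_{k=1}^{q} \bigl( 2\Phi(\sigma t/\sigma_k) - 1 \bigr), \]
which equals the claimed bound after pulling the factor $2^q$ out of the product. Taking the supremum over $t \geq 0$ yields the stated inequality. For the remaining assertions, the map $t \mapsto \Phi(\delta/\sigma - (1+w)t/(1-w)) \prod_k (\Phi(\sigma t /\sigma_k) - 0.5)$ is continuous on $[0,\infty]$ with the convention that it equals $0$ at $t = 0$ and in the limit $t \to \infty$, and is strictly positive on $(0,\infty)$, so its supremum is attained at some $t^{\ast} \in (0,\infty)$ and is strictly positive. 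For the final limit, given $\varepsilon > 0$ I would choose $t$ large enough that $2^q \prod_k (\Phi(\sigma t/\sigma_k) - 0.5) \geq 1 - \varepsilon/2$, and then take $\delta$ large enough that $\Phi(\delta/\sigma - (1+w)t/(1-w)) \geq 1 - \varepsilon/2$; since the bound is trivially at most $1$, it converges to $1$ as $\delta \to \infty$. I expect no substantive obstacle beyond the initial identification of the rejection event in the displayed form above.
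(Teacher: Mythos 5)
Your proposal is correct and follows essentially the same route as the paper's proof: both reduce the rejection event to the scalar inequality $\min\{(1+w)(X_1-\bar X_0),(1-w)(X_1-\bar X_0)\}>\max_k(X_{0,k}-\bar X_0)$, intersect with the box event $\{|X_{0,k}|\le \sigma t \text{ for all } k\}$ so that the probability factors by independence into $\Phi(\delta/\sigma-\tfrac{1+w}{1-w}t)\cdot 2^q\prod_k(\Phi(\sigma t/\sigma_k)-0.5)$, and then optimize over $t$. The only (immaterial) differences are that you absorb the rescaling $t\mapsto\sigma t$ into the events from the start and replace the paper's Berge-maximum-theorem argument for the $\delta\to\infty$ limit with a direct and, if anything, cleaner $\varepsilon$-argument.
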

The bound shows that the test exhibits a standard relationship between the signal $\delta$ and the noise components $\sigma_1,\dots,\sigma_q$. Power is low if the signal relative to $\sigma$ is weak or the noise in the control group relative to $\sigma$ is strong. The latter relationship is in contrast to Theorem~\ref{t:size}, where small $\sigma_k$ relative to $\sigma$ were problematic. In addition, the bound also clarifies that $w$ dampens $\delta$ through the function $w\mapsto (1+w)/(1-w)$, which is arbitrarily large for $w$ sufficiently close to $1$. A $w$ very close to $1$ can therefore drown out a large treatment effect even if the noise coming from the control observations is mild. (The role of the supremum is simply to find the best possible balance for a given set of parameters.) It is also worth noting that the bound is tight enough to converge to $1$ as $\delta \to \infty$ and to  $0$ as $w \to 1$.


Because the $w$ that satisfies $\xi_q(w,\varrho) = \alpha$ is not necessarily unique and because Proposition~\ref{p:lowerbound} suggests that power against the alternative $H_1 : \mu_1 > \mu_0$ for $w$ near one can be low, it is sensible to choose the smallest feasible $w$,  denoted by
\begin{equation}\label{eq:wdef}
w_q(\alpha,\varrho) = \inf\bigl\{ w \in (0,1) : \xi_q(w,\varrho) = \alpha \bigr\},
\end{equation}
in the definition of the rearrangement test function for a test of size $\alpha$,
\begin{equation}\label{eq:testfunalpha}
x\mapsto \varphi_\alpha(x) \coloneqq \varphi\bigl(x, w_q(\alpha,\varrho)\bigr).
\end{equation}
The test $\varphi_\alpha$ also depends on $\varrho$ but this is suppressed here to prevent clutter. 
Table~\ref{tb:worstcasebounds} lists values of $w_q(\alpha,\varrho)$ for common choices of $\alpha$ as a function of $\varrho$ and $q$. They guarantee
\begin{equation}\label{eq:sizealpha}
\sup_{\lambda \in \Lambda} \ev_{\lambda,0} \varphi_\alpha (X) \leq \alpha.
\end{equation}
The list is not exhaustive and additional values can be easily calculated by numerical integration. 
An \texttt{R} command that performs the calculations can be found at \texttt{https://hgmn.github.io/rea}. 

\begin{table}\caption{Weights $w_q(\alpha,\varrho)$ as defined in \eqref{eq:wdef} that guarantee size control at $\alpha$ for a given maximal degree of heterogeneity $\varrho = \bar{\sigma}/\ubar{\sigma}$ for different values of $q$.}\label{tb:worstcasebounds}
\centering
\scalebox{.95}{
\begin{tabular}{cp{0cm}cp{0cm}cccccccccc}																												\hline
	&	&		&	&		\multicolumn{9}{c}{$q$}																				\\	\cline{5-13}
$\alpha$	&	&	$\bar{\sigma}/\ubar{\sigma}$	&	&		10	&		15	&		20	&		25	&	30	&	35	&	40	&	45	&	49	\\	\hline
.10	&	&	2	&	&	\it	.6333	&		.4010	&		.3294	&		.2829	&	.2475	&	.2188	&	.1948	&	.1742	&	.1562	\\	
	&	&	3	&	&			&		.6098	&		.5543	&		.5221	&	.4983	&	.4792	&	.4632	&	.4495	&	.4375	\\	
	&	&	4	&	&			&		.7127	&		.6669	&		.6418	&	.6238	&	.6094	&	.5974	&	.5871	&	.5781	\\	
	&	&	5	&	&			&	\it	.7732	&		.7344	&		.7137	&	.6991	&	.6876	&	.6779	&	.6697	&	.6625	\\	
	&	&	6	&	&			&	\it	.8129	&		.7792	&		.7615	&	.7493	&	.7396	&	.7316	&	.7248	&	.7188	\\	
	&	&	7	&	&			&	\it	.8409	&		.8111	&		.7957	&	.7851	&	.7768	&	.7700	&	.7641	&	.7590	\\	
	&	&	8	&	&			&	\it	.8616	&		.8350	&		.8213	&	.8120	&	.8048	&	.7987	&	.7936	&	.7891	\\	
	&	&	9	&	&			&	\it	.8776	&		.8536	&		.8413	&	.8329	&	.8265	&	.8211	&	.8165	&	.8125	\\	
																											\\	
.05	&	&	2	&	&			&	\it	.5752	&		.5020	&		.4615	&	.4318	&	.4081	&	.3884	&	.3715	&	.3568	\\	
	&	&	3	&	&			&	\it	.7287	&		.6703	&		.6414	&	.6213	&	.6054	&	.5923	&	.5810	&	.5712	\\	
	&	&	4	&	&			&	\it	.8024	&		.7541	&		.7314	&	.7161	&	.7041	&	.6942	&	.6858	&	.6784	\\	
	&	&	5	&	&			&	\it	.8450	&		.8042	&		.7854	&	.7729	&	.7633	&	.7554	&	.7486	&	.7428	\\	
	&	&	6	&	&			&	\it	.8727	&		.8374	&		.8213	&	.8108	&	.8028	&	.7962	&	.7905	&	.7856	\\	
	&	&	7	&	&			&	\it	.8921	&		.8610	&		.8469	&	.8379	&	.8310	&	.8253	&	.8205	&	.8163	\\	
	&	&	8	&	&			&	\it	.9064	&		.8786	&		.8661	&	.8582	&	.8521	&	.8471	&	.8429	&	.8392	\\	
	&	&	9	&	&			&	\it	.9173	&		.8923	&		.8811	&	.8739	&	.8685	&	.8641	&	.8604	&	.8571	\\	
																											\\	
.025	&	&	2	&	&			&	\it	.6981	&		.6049	&		.5656	&	.5387	&	.5175	&	.5001	&	.4852	&	.4723	\\	
	&	&	3	&	&			&			&		.7400	&		.7111	&	.6926	&	.6784	&	.6667	&	.6568	&	.6482	\\	
	&	&	4	&	&			&			&	\it	.8069	&		.7838	&	.7696	&	.7588	&	.7501	&	.7426	&	.7362	\\	
	&	&	5	&	&			&			&	\it	.8466	&		.8273	&	.8157	&	.8071	&	.8001	&	.7941	&	.7889	\\	
	&	&	6	&	&			&			&	\it	.8728	&		.8563	&	.8465	&	.8393	&	.8334	&	.8284	&	.8241	\\	
	&	&	7	&	&			&			&	\it	.8914	&		.8770	&	.8685	&	.8622	&	.8572	&	.8529	&	.8493	\\	
	&	&	8	&	&			&			&	\it	.9053	&		.8924	&	.8849	&	.8795	&	.8751	&	.8713	&	.8681	\\	
	&	&	9	&	&			&			&	\it	.9160	&		.9045	&	.8978	&	.8929	&	.8890	&	.8856	&	.8828	\\	
																											\\	
.01	&	&	2	&	&			&			&	\it	.6986	&		.6543	&	.6286	&	.6092	&	.5935	&	.5801	&	.5686	\\	
	&	&	3	&	&			&			&	\it	.8058	&		.7709	&	.7527	&	.7396	&	.7290	&	.7201	&	.7124	\\	
	&	&	4	&	&			&			&	\it	.8578	&		.8290	&	.8147	&	.8047	&	.7968	&	.7901	&	.7843	\\	
	&	&	5	&	&			&			&	\it	.8882	&		.8636	&	.8519	&	.8438	&	.8374	&	.8321	&	.8275	\\	
	&	&	6	&	&			&			&	\it	.9080	&		.8866	&	.8767	&	.8699	&	.8645	&	.8601	&	.8562	\\	
	&	&	7	&	&			&			&	\it	.9219	&		.9030	&	.8943	&	.8885	&	.8839	&	.8801	&	.8768	\\	
	&	&	8	&	&			&			&	\it	.9322	&	\it	.9153	&	.9076	&	.9024	&	.8984	&	.8951	&	.8922	\\	
	&	&	9	&	&			&			&	\it	.9401	&	\it	.9248	&	.9179	&	.9133	&	.9097	&	.9067	&	.9042	\\	
																											\\	
.005	&	&	2	&	&			&			&	\it	.7642	&		.7029	&	.6764	&	.6576	&	.6426	&	.6300	&	.6191	\\	
	&	&	3	&	&			&			&			&	\it	.8042	&	.7847	&	.7719	&	.7618	&	.7534	&	.7461	\\	
	&	&	4	&	&			&			&			&	\it	.8544	&	.8389	&	.8290	&	.8214	&	.8150	&	.8096	\\	
	&	&	5	&	&			&			&			&	\it	.8842	&	.8713	&	.8632	&	.8571	&	.8520	&	.8477	\\	
	&	&	6	&	&			&			&			&	\it	.9040	&	.8929	&	.8861	&	.8809	&	.8767	&	.8731	\\	
	&	&	7	&	&			&			&			&	\it	.9180	&	.9082	&	.9024	&	.8980	&	.8943	&	.8912	\\	
	&	&	8	&	&			&			&			&	\it	.9284	&	.9198	&	.9146	&	.9107	&	.9075	&	.9048	\\	
	&	&	9	&	&			&			&			&	\it	.9365	&	.9287	&	.9241	&	.9207	&	.9178	&	.9154	\\	\hline
\end{tabular}																												
}
\justify
\begin{adjustwidth}{.75em}{.75em}
\footnotesize\emph{Note:} Missing cells mean that the test is not recommended or not feasible. \emph{Italics} mean that the bound in \eqref{eq:xidef} is relatively loose. Upright numbers mean that the bound is nearly tight.
\end{adjustwidth}
\end{table}		

Table~\ref{tb:worstcasebounds} shows that the rearrangement test is available in a wide variety of situations depending on the desired significance level and tolerance for heterogeneity. For instance, a test with a 10\% significance level is already available with $q=10$ control observations. A 5\% level test becomes available at $q=15$, a 1\% level test at $q=20$, and for $q\geq 25$ there are essentially no restrictions to the level and underlying heterogeneity. This provides two avenues for implementation:
\begin{enumerate}
	\item Choose a desired maximal degree of heterogeneity $\varrho$ and make test decisions based on this choice. 
	\item Determine at which degree of maximal heterogeneity the null hypothesis can no longer be rejected. 
\end{enumerate}
The first option is similar in spirit to the ubiquitous \citet{staigerstock1997} rule of thumb for weak instruments, where an $F$ statistic larger than 10 corresponds to a tolerance for an at most 10\% bias (as defined in \citealp{stockyogo2001}) in the instrumental variables estimator relative to least squares. The second option takes the form of a ``robustness check.'' It has a meaningful interpretation because a result that is robust to a tenfold larger standard deviation in the treated observation relative to the control sample is more credible than a result that only survives a twofold difference in standard deviation. This second option leaves it up to the reader to decide whether the results are convincing.

The test decision itself is simple. Choose $w = w_q(\alpha,\varrho)$ from Table~\ref{tb:worstcasebounds} for a given number of control observations $q$, desired significance level $\alpha$, and maximal tolerance for heterogeneity, e.g., $\varrho = 2$. For this $w$, compute $S = S(X, w)$ as in \eqref{eq:xdef} and reorder the entries of $S$ from largest to smallest to obtain $S^\smx$. For an $\alpha$-level test of $\mu_1 = \mu_0$, reject in favor of $\mu_1 > \mu_0$ if $T(S) = T(S^\smx)$ as defined in \eqref{eq:tdef}. For a one-sided test with level $\alpha$ against $\mu_1 < \mu_0$, reject if $T(-S) = T((-S)^\smx)$. For a two-sided test with level $2\alpha$, reject in favor of $\mu_1 \neq \mu_0$ if either $T(S) = T(S^\smx)$ or $T(-S) = T((-S)^\smx)$. The ``robustness check'' increases $\varrho$ until the null hypothesis can no longer be rejected against the desired alternative. The test decision is monotonic in $\varrho$, i.e., if $\varrho' > \varrho$ lead to the same test decision, then the decision does not change for any value between $\varrho$ and $\varrho'$. An \texttt{R} command that implements the test and the robustness check for any choice of $\varrho$ is available at \href{https://hgmn.github.io/rea}{\texttt{https://hgmn.github.io/rea}}.

I now turn to a discussion of some technical aspects of the size bound $\xi_q(w,\varrho)$ that forms the theoretical underpinning for the rearrangement test. The bound, defined in \eqref{eq:xidef}, has three components with simple interpretations: The $1/2^{q+1}$ removes an unlikely event ($X_1 < \mu_0$, $X_{0,1} < \mu_0, \dots, X_{0,q} < \mu_0$ at the same time) from consideration. This forces a monotonicity property over the complement of this event and allows tightly bounding an oracle version of the problem where $\mu_0$ replaces $\bar{X}_0$ in \eqref{eq:xdef}. This bound is the integral in \eqref{eq:xidef}. The minimization problem then adjusts for the fact that the data are centered by $\bar{X}_0$ instead of the unknown $\mu_0$. The minimizer does not have closed form but is easily found numerically.\footnote{In particular, at $t = 1/q$, $\Phi(\sqrt{q-1} w t )^{q-1} + 2\Phi(- qt ) <  \Phi(1/\sqrt{q})^{q-1} + 2\Phi(- 1) < 1$ for $q > 2$. Because $\Phi(\sqrt{q-1} w t )^{q-1} + 2\Phi(- qt ) \geq 1$ at $t\in \{0,\infty \}$, the minimization problem always has an interior solution. This also implies that the bound as a whole is a smooth function of $w$ and $\varrho$.} Taken together, $\xi_q(w_q(\alpha, \varrho),\varrho)$ can therefore be roughly viewed as a tight bound for a high-probability event plus two small adjustments. I use Table~\ref{tb:worstcasebounds} to illustrate the relative size of these adjustments. In the table, empty cells correspond to situations where there is either no $w$ such that $\xi_q(w, \varrho) = \alpha$ or more than $\alpha/2$ of $\xi_q(w_q(\alpha, \varrho),\varrho)$ is taken up by the non-tight parts of the bound. Cells in \emph{italics} are settings where between $\alpha/2$ and $\alpha/10$ of the bound are taken up by the non-tight parts. The lack of tightness in the remaining cells is less than $\alpha/10$. For these cells $\sup_{\lambda \in \Lambda} \ev_{\lambda,0} \varphi_\alpha (X)$ approximately equals $\alpha$. As the table shows, $\xi_q(w_q(\alpha, \varrho),\varrho)$ is an essentially tight bound for $\sup_{\lambda \in \Lambda} \ev_{\lambda,0} \varphi_\alpha (X)$ for $q\geq 30$. The bound is also nearly tight for values of $q$ as small as 15 as long as $\varrho$ is not too large. I return to a discussion of this aspect of the rearrangement test in Example~\ref{ex:ctcomp} (ahead), where I illustrate the size of the test numerically.

Finally, before concluding this section, I show that the rearrangement test remains approximately valid for random vectors $X_n$ converging in distribution to the random vector $X = (X_1, X_{0,1},\dots, X_{0,q})$ described in Theorem \ref{t:size}. The reason is that $\ev \varphi(X_n, w)$ and $\ev \varphi(X, w)$ eventually coincide whenever $X$ has independent entries and a smoothly distributed first entry. The $X$ in Theorem~\ref{t:size} easily satisfies these conditions, which makes $\varphi_\alpha(X_n)$ asymptotically an $\alpha$-level test.
\begin{proposition}[Large sample approximation]\label{t:approx}
Let $X_1, X_{0,1},\dots, X_{0,q}$ be independent and let $X_1$ have a continuous distribution. If $X_n\leadsto X$, then $\ev \varphi (X_n, w) \to  \ev\varphi (X, w)$ for every $w\in(0,1)$.
\end{proposition}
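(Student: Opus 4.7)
The plan is to apply the Portmanteau theorem to the event $A_w = \{x\in\mathbb{R}^{q+1} : \varphi(x,w) = 1\}$. Since $\varphi(\cdot,w) = 1_{A_w}$ is a bounded indicator, $X_n\leadsto X$ gives $\ev\varphi(X_n,w) = \prob(X_n \in A_w) \to \prob(X \in A_w) = \ev\varphi(X,w)$ whenever $A_w$ is a continuity set of the law of $X$, that is, whenever $\prob(X\in\partial A_w) = 0$. Verifying this last condition is the only substantive task.

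The first step is to derive a workable description of $A_w$. Writing $s = S(x,w)$ and exploiting the identity $\sum_j s_j = \sum_j s_{(j)}$, a short calculation shows that $T(s) = T(s^\smx)$ is equivalent to $s_1 + s_2 = s_{(q+1)} + s_{(q+2)}$, i.e., to the requirement that the first two entries of $s$ achieve the maximum pairwise sum among the coordinates, which is in turn equivalent to $\min(s_1,s_2) \geq \max_{3\leq j\leq q+2} s_j$. The first two entries of $S(x,w)$ are $(1\pm w)(x_1-\bar{x}_0)$, while the remaining entries are $x_{0,k}-\bar{x}_0$ whose maximum is always nonnegative; checking the two cases $x_1 \gtrless \bar{x}_0$ collapses the minimum to the $(1-w)$-scaled factor and yields the exact identity
\[
A_w = \bigl\{x\in\mathbb{R}^{q+1} : (1-w)(x_1 - \bar{x}_0) \geq \max_{1\leq k\leq q}(x_{0,k} - \bar{x}_0)\bigr\},
\]
with equality of the defining inequality characterizing the boundary.

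Thus $\partial A_w$ is contained in $\bigcup_{k=1}^{q} H_k$, where $H_k = \{x : (1-w)(x_1 - \bar{x}_0) = x_{0,k} - \bar{x}_0\}$. Because $1-w \neq 0$ for $w\in(0,1)$, each $H_k$ can be rewritten as the graph $x_1 = L_k(x_{0,1},\dots,x_{0,q})$ of a fixed affine function of the control coordinates. Since $X_1$ is independent of $(X_{0,1},\dots,X_{0,q})$ and has a continuous distribution, Fubini's theorem gives
\[
\prob\bigl(X_1 = L_k(X_{0,1},\dots,X_{0,q})\bigr) = \ev\bigl[\prob\bigl(X_1 = L_k(X_{0,1},\dots,X_{0,q}) \mid X_{0,1},\dots,X_{0,q}\bigr)\bigr] = 0,
\]
and a union bound over $k=1,\dots,q$ yields $\prob(X\in\partial A_w)=0$. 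The Portmanteau theorem then delivers $\prob(X_n\in A_w)\to\prob(X\in A_w)$, which is the claim. The main obstacle is the initial algebraic reduction of the rearrangement event $\{T(s) = T(s^\smx)\}$ to the single linear inequality defining $A_w$; once that reduction is in place, the independence assumption combined with the absolute continuity of $X_1$ makes the null-set verification immediate.
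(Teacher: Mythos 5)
Your proof is correct and follows essentially the same route as the paper's: both reduce the rejection event to the single inequality $(1-w)(X_1-\bar{X}_0) \geq \max_{1\leq k\leq q}(X_{0,k}-\bar{X}_0)$ and then show that the corresponding boundary (equality) event has probability zero under the limit law by combining independence with the continuity of the distribution of $X_1$. The only difference is packaging---you invoke the Portmanteau theorem for the continuity set $A_w$, whereas the paper applies the continuous mapping theorem to the scalar $\min\{S_1,S_2\}-\max\{S_3,\dots,S_{q+2}\}$ and checks that its limiting distribution function is continuous at zero---which amounts to the same null-set verification.
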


I use Theorem \ref{t:size} and Proposition \ref{t:approx} in the next section to construct a simple method for inference with a single treated cluster. Section \ref{s:montecarlo} shows how the rearrangement test performs in Monte Carlo experiments.

\section{Inference with a single treated cluster}\label{s:singleclust}

In this section, I use a single high-level condition to extend the rearrangement test introduced in the previous section to a test about a scalar parameter in research designs with a finite number of large, heterogeneous clusters where only a single cluster received treatment. I then outline how these results can be applied in empirical practice. 

Suppose data from $q+1$ large clusters (e.g., states, industries, or villages observed over one or more time periods) are available. Data are dependent within clusters but independent across clusters. The exact form of dependence is unknown and not presumed to be estimable. An intervention took place during which one cluster received treatment and and $q$ clusters did not. The quantity of interest is a treatment effect or an object related to it that can be represented by a scalar parameter~$\delta$. Because the entire cluster was treated, this parameter is only identified up to a location shift $\theta_0$ within the treated cluster and therefore only the left-hand side of \[ \theta_1 = \theta_0 + \delta \] can be identified from this cluster. If the treated cluster would have behaved similarly to the untreated clusters in the absence of an intervention, then $\theta_0$ can be identified from each untreated cluster. Pairwise comparison then identifies $\delta$.

The identification strategy outlined in the preceding paragraph is the idea behind differences in differences---arguably the most popular identification strategy in modern empirical research---and a variety of other models. The goal of this section is to use the rearrangement test to provide a generic method for testing the hypothesis
\[H_0\colon \delta = 0, \] 
or, equivalently, $H_0\colon \theta_1 = \theta_0$. I achieve this by obtaining an estimate $\hat{\theta}_{1}$ of $\theta_1$ and estimates \smash{$\hat{\theta}_{0,1},\dots, \hat{\theta}_{0,q}$} of $\theta_0$ so that \[ \hat{\theta}_n = (\hat{\theta}_{1}, \hat{\theta}_{0,1} \dots, \hat{\theta}_{0,q}) \] is approximately a vector of independent but potentially heterogeneous normal variables that can be used as if it were the data vector $X$ from Section~\ref{s:normal}. 

The following example explains how to construct $\hat{\theta}_n$ in a simple situation. I discuss construction of $\hat{\theta}_n$ for difference in differences towards the end of this section.
\begin{example}[Regression with cluster-level treatment]\label{ex:clusterreg}
Consider a linear regression model
\begin{equation*}
Y_{i,k} = \theta_0 + \delta D_{k} + \beta_k' X_{i,k} + U_{i,k},
\end{equation*}
where $i$ indexes individuals within cluster $k$. There are $q+1$ clusters and individuals in cluster $k = q+1$ received treatment ($D_k=1$) but those in $1\leq k\leq q$ did not ($D_k=0$). The parameter of interest $\delta$ on the treatment indicator $D_k$ can be interpreted as an average treatment effect under suitable conditions. See, e.g., \citet{sloczynski2018, sloczynski2020} and references therein for a precise discussion. The regression may also include covariates $X_{i,k}$ that vary within each cluster and have coefficients $\beta_k$ that may vary across clusters. The condition $\ev(U_{i,k}\mid D_k, X_{i,k}) = 0$ identifies $\theta_1 = \theta_0 + \delta$ within the treated cluster and $\theta_0$ within the untreated clusters. The preceding display can then be written as
\begin{align*}
Y_{i,k} = 
\begin{cases}
\theta_0 + \beta_k' X_{i,k} + U_{i,k}, &1\leq k\leq q,\\ 
\theta_1 + \beta_k' X_{i,k} + U_{i,k}, &k = q + 1.
\end{cases}
\end{align*}
View these as $q+1$ separate regressions and use the least squares estimates of the constants $\theta_1$ and $\theta_0$ as the vector $\hat{\theta}_n = (\hat{\theta}_{1}, \hat{\theta}_{0,1} \dots, \hat{\theta}_{0,q})$ described above. \sqed
\end{example}

I will now show that the cluster-level statistics $\hat{\theta}_n$ can be used together with the results in the previous section to perform a consistent test as the sample size $n$ grows large.  The test is not limited to parameters estimated by least squares. Instead, consistency relies on the condition that a centered and scaled version of some estimate $\hat{\theta}_n$ converges to a $(q+1)$-dimensional normal distribution,
\begin{equation}\label{eq:jointconv}
\sqrt{n}\biggl(\frac{\hat{\theta}_{1} - \theta_1}{\sigma(\theta_1)},\frac{\hat{\theta}_{0,1} - \theta_0}{\sigma_{1}(\theta_0)}, \dots, \frac{\hat{\theta}_{0,q} - \theta_0}{\sigma_{q}(\theta_0)}\biggr) \wto{\theta} N(0, I_{q+1}),	
\end{equation}
where $\wto{\theta}$ denotes weak convergence under $\theta = (\theta_1,\theta_0)$. For fixed $\theta$, the display can be interpreted as $\sqrt{n}(\hat{\theta}_{1} - \theta_1, \dots,\hat{\theta}_{0,1} - \theta_0, \dots, \hat{\theta}_{0,q} - \theta_0)\leadsto N(0, \diag(\sigma, \sigma_1, \dots, \sigma_q))$ to include the case that one of the $\sigma_1,\dots, \sigma_q$ may be zero as in Theorem \ref{t:size}. 

A key feature of condition \eqref{eq:jointconv} is that the $\sigma$ and $\sigma_1,\dots, \sigma_q$ are not assumed to be known or estimable by the researcher. This is important for applications because consistent variance estimation generally requires knowledge of an explicit ordering of the dependence structure within each cluster. While time-dependent data are automatically ordered, it may be difficult or impossible to infer or credibly assume an ordering of the data within states or villages. In contrast, \eqref{eq:jointconv} can be established under weak (short-range) dependence conditions that only require \emph{existence} of a potentially unknown ordering for which the dependence of more distant units decays sufficiently fast. \citet{machkouriaetal2013} present convenient moment bounds and limit theorems for this situation. For more results in this direction, see also \citet{besteretal2014} and references therein. In general, the convergence in \eqref{eq:jointconv} also implicitly requires the number of observations in all clusters to grow with the sample size $n$. However, the clusters are not required to have similar or even identical sizes. Another noteworthy feature of condition \eqref{eq:jointconv} is the diagonal covariance matrix of the limiting distribution. It is the only independence condition that is imposed on the clusters.

I now show that under the joint convergence \eqref{eq:jointconv}, a rearrangement test that uses $\hat{\theta}_n$ is asymptotically of level $\alpha$ with a single treated cluster and a fixed number of control clusters. The test $\varphi_\alpha(\hat{\theta}_n)$, as defined in \eqref{eq:testfunalpha}, has power against all fixed alternatives $\theta_1 = \theta_0 + \delta$ with $\delta > 0$ and local alternatives $\theta_1 = \theta_0 + \delta/\sqrt{n}$ converging to the null. In the latter situation, $\theta_0$ is fixed and $\theta = (\theta_0 + \delta/\sqrt{n}, \theta_0)$ implicitly depends on $n$. The convergence in \eqref{eq:jointconv} is then a statement about an entire sequence $(\theta_0 + \delta/\sqrt{n}, \theta_0)$ instead of a single point. Results for alternatives with $\delta < 0$ follow from the same result by considering $\varphi_\alpha(-\hat{\theta}_n)$. These tests can be combined into a two-sided test that has power against fixed and local alternatives from either direction. Algorithm \ref{a:test} at the end of this section shows how this can be implemented.
\begin{theorem}[Consistency and local power] \label{t:consistency}
Suppose \eqref{eq:jointconv} holds with $\sigma^2 >0$ and at most one $\sigma_k = 0$. If $\theta_1 = \theta_0$, then \[ \lim_{n\to\infty}\ev \varphi_\alpha(\hat{\theta}_n) \leq \alpha,\qquad\text{every $\alpha,\varrho$ with $0 < w(\alpha,\varrho) < 1$,} \] and if $\theta_1 > \theta_0$, then $\ev \varphi_\alpha(\hat{\theta}_n)\to 1$. If \eqref{eq:jointconv} holds with $\theta = (\theta_0 + \delta/\sqrt{n},\theta_0)$ and the $\sigma, \sigma_1,\dots, \sigma_{q}$ are continuous and positive at $\theta_0$, then
\begin{equation*}\label{eq:asylocalpower}
\lim_{n\to\infty}\ev \varphi_\alpha(\hat{\theta}_n)  \geq 2^q\sup_{t \geq 0}\Phi\Biggl(\biggl(\frac{\delta}{\sigma(\theta_0)} - \frac{1+w_q(\alpha,\varrho)}{1-w_q(\alpha,\varrho)}t\biggr)\Biggr)\prod_{k=1}^q\Biggl(\Phi\biggl(\frac{\sigma(\theta_0)}{\sigma_k(\theta_0)}t\biggr) - 0.5\Biggr)>0.
\end{equation*}
\end{theorem}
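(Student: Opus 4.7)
The plan is to exploit the location and scale invariance of $\varphi_\alpha$ noted after equation~\eqref{eq:phidef} to reduce everything to the normal two-sample problem analyzed in Section~\ref{s:normal}. Throughout, let $Z_n = \sqrt{n}(\hat\theta_n - \theta_0 1_{q+1})$. Because $\varphi_\alpha$ is invariant under translation by multiples of $1_{q+1}$ and under multiplication by positive scalars, $\varphi_\alpha(\hat\theta_n) = \varphi_\alpha(Z_n)$ for every $n$.

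For the null claim, specialize \eqref{eq:jointconv} to $\theta_1 = \theta_0$ to obtain $Z_n \leadsto X$ with $X_1 \sim N(0, \sigma(\theta_0)^2)$, $X_{0,k}\sim N(0, \sigma_k(\theta_0)^2)$, mutually independent; at most one of the $\sigma_k$ may be zero, and the assumption $\sigma^2 > 0$ makes $X_1$ continuously distributed. Proposition~\ref{t:approx} then gives $\ev\varphi_\alpha(\hat\theta_n) = \ev\varphi_\alpha(Z_n) \to \ev\varphi_\alpha(X)$, and Theorem~\ref{t:size} combined with the definition \eqref{eq:wdef} of $w_q(\alpha,\varrho)$ bounds the right-hand side by $\xi_q(w_q(\alpha,\varrho),\varrho) \leq \alpha$. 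For the local power bound, let $\theta = (\theta_0 + \delta/\sqrt{n}, \theta_0)$ in \eqref{eq:jointconv}. Continuity of $\sigma,\sigma_1,\dots,\sigma_q$ at $\theta_0$ together with Slutsky's lemma yields $Z_n \leadsto X^*$ where $X_1^* \sim N(\delta, \sigma(\theta_0)^2)$, $X_{0,k}^* \sim N(0, \sigma_k(\theta_0)^2)$, all independent. Proposition~\ref{t:approx} applies because $\sigma(\theta_0) > 0$, so $\ev\varphi_\alpha(\hat\theta_n)\to\ev\varphi_\alpha(X^*)$, and Proposition~\ref{p:lowerbound} bounds the limit below by the displayed expression.

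Under a fixed alternative $\theta_1 = \theta_0 + \delta$ with $\delta > 0$, the control entries of $Z_n$ remain $\Op(1)$ by \eqref{eq:jointconv}, while $Z_{n,1} = \sqrt{n}(\hat\theta_1 - \theta_1) + \sqrt{n}\delta \to +\infty$ in probability. Inspection of \eqref{eq:xdef}--\eqref{eq:phidef} shows that $\varphi_\alpha(Z_n) = 1$ as soon as the pair $((1+w)(Z_{n,1}-\bar Z_{n,0}), (1-w)(Z_{n,1}-\bar Z_{n,0}))$ constitutes the two largest entries of $S(Z_n, w)$, because then $S(Z_n, w)$ agrees with $S(Z_n, w)^\smx$ up to permutations of the first two and of the last $q$ slots, neither of which affects $T$. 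That ordering holds whenever
\[
\min\bigl\{(1+w)(Z_{n,1}-\bar Z_{n,0}),\, (1-w)(Z_{n,1}-\bar Z_{n,0})\bigr\} > \max_{1\leq k\leq q}(Z_{n,0,k}-\bar Z_{n,0}),
\]
whose left-hand side grows like $\sqrt{n}\delta$ while the right-hand side is tight. Hence $\ev\varphi_\alpha(\hat\theta_n)\to 1$.

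The main obstacle is this last step: the fixed-alternative case cannot invoke Proposition~\ref{t:approx} or the power statement $\lim_{\delta\to\infty}\ev_{\lambda,\delta}\varphi = 1$ of Theorem~\ref{t:size} directly, because the former requires a tight weak limit (here $Z_{n,1}$ diverges) and the latter is formulated for a fixed random vector whose mean parameter is sent to infinity rather than a random sequence whose first coordinate diverges. The workaround is the explicit structural observation above, which converts consistency into the elementary claim that a $\sqrt{n}$-drifting coordinate eventually dominates $\Op(1)$ controls.
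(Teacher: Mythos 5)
Your proposal is correct and follows essentially the same route as the paper: location--scale invariance reduces each case to the limiting normal vector, with Proposition~\ref{t:approx} plus Theorem~\ref{t:size} handling the null, Slutsky plus Proposition~\ref{p:lowerbound} handling the local alternative, and a direct argument handling the fixed alternative. The only cosmetic difference is in that last step, where the paper rescales the min-minus-max statistic by $1/\sqrt{n}$ and applies the continuous mapping theorem to obtain a positive degenerate limit, whereas you argue directly that the $\sqrt{n}\delta$ drift dominates the $\Op(1)$ control terms --- the same idea in different clothing.
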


\begin{remarks}
(i)~Because $\varphi_\alpha(\hat{\theta}_n) = 1$ if and only if $\varphi_\alpha(a(\hat{\theta}_n - \theta_0 1_{q+1})) = 1$, where $a> 0$ and $1_{q+1}$ is a $(q+1)$-vector of ones, the $\sqrt{n}$-rate in \eqref{eq:jointconv} and in the theorem can be replaced by any other rate as long as the asymptotic normal distribution in \eqref{eq:jointconv} is still attained. Several semiparametric or nonstandard estimators are therefore covered by the theorem.

(ii)~It is sometimes of interest in applications to test the null hypothesis $H_0\colon \theta_1 = \theta_0 + \gamma$ for a given $\gamma$. In that case, define $\Gamma = (\gamma 1\{k=1 \})_{1\leq k\leq q+1}$ and reject if $\varphi_\alpha(\hat{\theta}_n - \Gamma) = 1$. Replace $\theta_0$ by $\theta_0 + \gamma$ in Theorem \ref{t:consistency} and use part~(i) of this remark to see that this leads to a consistent test. \sqed 
\end{remarks}

I now discuss how the high-level condition \eqref{eq:jointconv} can be verified in an application. The specific example I use is difference-in-differences estimation but the arguments presented here apply more broadly. See also \citet{canayetal2014} and \citet{hagemann2019b} for similar types of arguments in other models. For simplicity, I focus on \eqref{eq:jointconv} under the null hypothesis $H_0 : \theta_1 = \theta_0$.

\begin{example}[Difference in differences]\label{ex:diffindiff}
Consider the panel model
\begin{equation}\label{eq:diffindiff}
Y_{i,t,k} = \theta_0 I_t + \delta I_t D_{k} + \beta_k' X_{i,t,k} + \zeta_{i,k}+ U_{i,t,k},
\end{equation}
where $i$ indexes individuals $i$ in unit $k\in\{1,\dots, q+1\}$ at time $t\in\{0,1\}$. Treatment occurred between periods $0$ and $1$. Right-hand side variables are a post-intervention indicator $I_t=1\{t= 1\}$, a treatment indicator $D_k$ that equals $1$ if unit $k$ ever received treatment, individual fixed effects $\zeta_{i,k}$, and other covariates $X_{i,t,k}$ that for every $k$ vary at least before or after the intervention. The collection of pre and post intervention data from unit $k$ forms the $k$-th cluster. Let $n_k$ be the number of individuals in cluster $k$ so that $n = 2\sum_{k=1}^{q+1} n_k$ is the total sample size. View each cluster as a separate regression and rewrite \eqref{eq:diffindiff} in first differences as
\begin{equation*}
\Delta Y_{i,k} = 
\begin{cases} 
\theta_0 + \beta_k' \Delta X_{i,k} + \Delta U_{i,k}, &1\leq k\leq q, \\ 
\theta_1 + \beta_k' \Delta X_{i,k} + \Delta U_{i,k}, &k = q + 1,
\end{cases}
\end{equation*}
where $\Delta Y_{i,k} = Y_{i,1,k} - Y_{i,0,k}$ and so on. Provided $\ev (\Delta U_{i,k}\mid \Delta X_{i,k}) = 0$, the data identify $\theta_1 = \theta_0 + \delta$ in a treated cluster and $\theta_0$ in an untreated cluster.  The least squares estimates $\hat{\theta}_1$ and $\hat{\theta}_{0,k}$ of the parameters $\theta_1$ and $\theta_0$ are suitable cluster-level estimates if $\hat{\theta}_n = (\hat{\theta}_{1}, \hat{\theta}_{0,1},\dots, \hat{\theta}_{n,q})$ satisfies condition \eqref{eq:jointconv}. 

In the absence of covariates (i.e., $\beta_k\equiv 0$), the centered and scaled least squares estimate in a control cluster under $H_0$ can be expressed as 
\[ \sqrt{n}(\hat{\theta}_{0,k} - \theta_0) = \biggl(\frac{n}{n_{k}}\biggr)^{1/2} n_{k}^{-1/2}\sum_{i=1}^{n_k}\Delta U_{i,k}. \] The same is true for $\sqrt{n}(\hat{\theta}_{1} - \theta_0)$ with $k=q+1$ on the right-hand side of the display.
If the number of individuals per cluster is large in the sense that $n/n_k \to c_k \in (0,\infty)$ for $1\leq k\leq q+1$, then condition \eqref{eq:jointconv} already holds if $n^{-1/2}(\sum_{i=1}^{n_k}U_{i,0,k},$ $\sum_{i=1}^{n_k}U_{i,1,k})$ is independent across $1\leq k\leq q+1$ and has a non-degenerate normal limiting distribution for each $k$. The latter condition can be ensured with a central limit theorem for spatially dependent data. See, e.g., \citet{jenischprucha2009} and \citet{machkouriaetal2013} for appropriate results. 
If the number of individuals per cluster is small,  then Theorem~\ref{t:size} implies that the rearrangement test can still be applied under the assumption that $((U_{i,0,k})^T_{1\leq i\leq n_k}, (U_{i,1,k})^T_{1\leq i\leq n_k})$ is multivariate normal for $1\leq k\leq q+1$. This last condition may be strong but serves to illustrate that $\hat{\theta}_1$ and $\hat{\theta}_{0,k}$ need not even be consistent for the test to be valid.

Now consider pooled cross sections with $n_k$ individuals in period~$0$, $m_k$ individuals in period~$1$, and $\zeta_{i,k}\equiv \zeta_k$. The calculations in the preceding paragraph still apply with minor modifications. For period $1$, $n_k$ has to be replaced by $m_k$. The analysis is no longer in first differences but the underlying conditions are essentially identical as long as $n/n_k \to c_k\in (0,\infty)$ and $n/m_k \to c_k' \in (0,\infty)$ for $1\leq k\leq q+1$, where $n$ is the total sample size. If the number of individuals available post intervention $m  = \sum_{k=1}^{q+1} m_k$ is relatively small in the sense that $m/n_k \to 0$ and $m/m_k \to c_k' \in (0,\infty)$, the scale invariance discussed in the remarks below Theorem \ref{t:consistency} allows replacement of the $\sqrt{n}$ in \eqref{eq:jointconv} by $\sqrt{m}$. Then \eqref{eq:jointconv} holds if $n_k^{-1/2}\sum_{t=1}^{n_k}U_{i,0,k} = O_P(1)$ and $m_k^{-1/2}\sum_{t=1}^{m_k}U_{i,1,k}$ obeys a central limit theorem for $1\leq k\leq q + 1$. The same argument applies with the roles of $n_k$ and $m_k$ reversed if relatively few individuals are available pre intervention. 
 
 The calculations in the preceding two paragraphs can be generalized to include covariates and additional time periods at the expense of more involved notation and non-singularity conditions. The same types of arguments also apply if each cluster consists of one or few units over many time periods, although the conditions for time dependence are generally less involved. See \citet{dedeckeretal2007} for a comprehensive overview.  These remarks and the calculations in this example also apply to the regression model in Example \ref{ex:clusterreg}. \sqed	
\end{example}

\begin{remark}[Nonlinear models]
The methodology presented here also includes nonlinear models because the parameter $\delta$ does not need to be interpretable by itself. For example, suppose the model in Example \ref{ex:clusterreg} is the latent model in a binary choice framework with symmetric link function $F$ and $\beta_k \equiv \beta$. Then $F(\theta_0 + \delta + \beta' x) - F(\theta_0 + \beta' x)$ for some $x$ may be the treatment effect of interest but $H_0\colon \delta = 0$ still determines whether the treatment effect is zero or not. Estimates of $\theta_0$ and $\theta_1 = \theta_0 + \delta$ from these models typically do not have closed form in the presence of covariates but generally have asymptotic linear representations to which the same types of arguments as in Example \ref{ex:diffindiff} can be applied.\sqed
\end{remark}

Before concluding this section, I present a brief summary of how the rearrangement test can be implemented in practice. By Theorem \ref{t:consistency}, the following procedure provides an asymptotically $\alpha$-level test in the presence of a finite number of large clusters when only a single cluster received treatment. The test is computationally simple and does not require simulation or resampling, can be two-sided or one-sided in either direction, is able to detect all fixed alternatives, and is powerful against $1/\sqrt{n}$-local alternatives. Recall that $\varrho$ here measures how much more variable the estimate from the treated cluster $\smash{\hat{\theta}_1}$ can be relative to the second-least variable control cluster estimate $\hat{\theta}_{0,k}$. A $\varrho$ of $5$ means that the (asymptotic) variance of $\smash{\hat{\theta}_1}$ can be up to $5^2 = 25$ times larger. There is no restriction on how much \emph{less} variable $\smash{\hat{\theta}_1}$ can be than any of the other estimates and $\smash{\hat{\theta}_1}$ can be infinitely more variable than the least variable control cluster. (See also the discussion above Theorem \ref{t:size}.)

\begin{algorithm}[Rearrangement test]\label{a:test}
\begin{enumerate}
\item \label{a:test1} Choose $w$ from Table~\ref{tb:worstcasebounds} for the given number of control clusters $q$, desired significance level $\alpha$, and maximal tolerance for heterogeneity, e.g., $\varrho = 2$.
	\item Compute for each untreated cluster $k = 1,\dots, q$ an estimate $\hat{\theta}_{0,k}$ of $\theta_0$ and compute an estimate $\smash{\hat{\theta}_1}$ of $\theta_1$ from the treated cluster so that the difference $\theta_1-\theta_0$ is the treatment effect of interest. (See Examples \ref{ex:clusterreg} and \ref{ex:diffindiff} above.) Use $\hat{\theta}_n = (\hat{\theta}_{1}, \hat{\theta}_{0,1} \dots, \hat{\theta}_{0,q})$ as if it were $X$ in \eqref{eq:xdef} to compute $S = S(\hat{\theta}_n, w)$ with $w$ as in Step \eqref{a:test1}. Note that $\bar{X}_0$ is replaced here by $q^{-1}\sum_{k=1}^q \smash{\hat{\theta}_{0,k}}$.
\item Reorder the entries of $S$ from largest to smallest. Denote this by $S^\smx$ as defined above \eqref{eq:tdef}. Compute $T(S)$ and $T(S^\smx)$ as in \eqref{eq:tdef}.
\item Reject $H_0\colon \theta_1 = \theta_0$ in favor of
\begin{enumerate}
\item $H_1\colon \theta_1 > \theta_0$ if $T(S) = T(S^\smx)$.
\item $H_1\colon \theta_1 < \theta_0$ if $T(-S) = T((-S)^\smx)$.
\item $H_1\colon \theta_1 \neq \theta_0$ if either $T(S) = T(S^\smx)$ or $T(-S) = T((-S)^\smx)$ but use $\alpha/2$ in Step \eqref{a:test1}.\sqed
\end{enumerate}
\end{enumerate}	
\end{algorithm}

This test can also be used as a ``robustness check'' if inference was originally performed with a method designed for a finer level of clustering, e.g., at the county level instead of the state level. In that case Algorithm \ref{a:test} can illustrate how well the results of the original test hold up if there is dependence across counties. As I point out in Section \ref{s:normal}, one could start at $\varrho = 0$ or $\varrho = 1$ and increase $\varrho$ until the null hypothesis can no longer be rejected. This is informative because a result that holds up to a potentially $\varrho^2 = 25$ times larger variance is more credible than a result that only holds if $\varrho^2 = 1$, i.e., if $\hat{\theta}_1$ cannot be more variable than all but one $\hat{\theta}_{0,k}$. If the rearrangement test is used in difference-in-differences models in conjunction with the popular \citet{conleytaber2011} test, it is important to note that $\varrho^2 = 1$ still allows for substantial heterogeneity whereas the \citetalias{conleytaber2011} test presumes full homogeneity across clusters. 

An \texttt{R} command that implements Algorithm \ref{a:test} and the robustness check for any choice of $\varrho$ is available at \href{https://hgmn.github.io/rea}{\texttt{https://hgmn.github.io/rea}}. The next section shows how the rearrangement test performs in simulations and an application.

\section{Numerical results}\label{s:montecarlo}
This section explores the finite-sample behavior of the rearrangement test in two experiments. Example \ref{ex:ctcomp} compares the rearrangement test to the widely used \citet{conleytaber2011} test in the two-way fixed effects model with clusters. Example \ref{ex:tenncare} applies the rearrangement test as a robustness check for the results of \citet{garthwaiteetal2014}. The discussion focuses on one-sided tests to the right but the results apply more generally.

	

\begin{example}[Two-way fixed effects; \citealp{conleytaber2011}]\label{ex:ctcomp}
This example uses a Monte Carlo experiment to compare rearrangement to the \citeauthor{conleytaber2011} (\citeyear{conleytaber2011}) test. The \citetalias{conleytaber2011} test is designed specifically for difference in differences and applies to models with a single treated cluster. Following \citet[sec.~V]{conleytaber2011}, the data are generated from the two-way fixed effects model 
\begin{equation}\label{eq:twfe}
Y_{t,k} = \delta I_t D_k + \eta_t + \zeta_k + U_{t, k},
\end{equation}
where $I_t$ is a post-intervention indicator, $D_k$ is a treatment indicator, and $\eta_t$ and $\zeta_k$ are time and cluster fixed effects, respectively. The error term satisfies 
\begin{equation}\label{eq:twfe_ar1} U_{t,k} = \gamma U_{t-1,k} + \sigma^{1\{k=q+1\}} V_{t, k},\end{equation} where the $V_{t, k}$ are iid copies of a standard normal variable and $k=q+1$ is the one cluster that received treatment. The model uses $\eta_t \equiv 0 \equiv \zeta_k$, ten time periods with four post-intervention periods, and, unless stated otherwise, $\gamma = .5$ and $\delta = 0$. I do not consider all of \citeauthor{conleytaber2011}'s variations of their model and, to focus on the simplest possible situation, I do not include covariates. I expand upon their analysis by investigating smaller numbers of control clusters $q$ and values of $\sigma$ other than one. In the latter situation, the \citetalias{conleytaber2011} test can be expected to fail because it relies heavily on homogeneity of all clusters in absence of an intervention. The \citetalias{conleytaber2011} test can be restored (as $q\to \infty$) if the exact form of heterogeneity is known \citep{fermanpinto2019, ferman2020} but this is not assumed here.

The \citetalias{conleytaber2011} test with one treated cluster can be computed as follows: (1)~Regress the outcome on $I_t D_k$, time and cluster fixed effects, and other covariates (if available). Denote the coefficient on $I_t D_k$ by $\hat\delta$. (2) Split the residuals by cluster and run, for each of the $q$ control clusters separately, regressions of the residuals on a constant and $I_t$. (3) Compute the $1-\alpha$ empirical quantile of the $q$ coefficients on $I_t$. Reject $H_0\colon \delta = 0$ if $\hat\delta$ is larger than that quantile. 

The rearrangement test can be computed similarly from $q+1$ separate artificial regressions of $Y_{t,k}$ on a constant and the post-intervention indicator $I_t$, 
\begin{align*}
Y_{t, k} &= \zeta + \theta_0 I_t + \mathrm{error}_{t,k}, \qquad 1\leq k\leq q,\\
Y_{t, k} &= \zeta + \theta_1 I_t + \mathrm{error}_{t,k}, \qquad k = q+1,
\end{align*}
where $\zeta$ is the intercept in each regression. The coefficients on the post-intervention indicator can be expressed as $\theta_0 = \bar{\eta}_+ - \bar{\eta}_-$ and $\theta_1 = \delta + \bar{\eta}_+ - \bar{\eta}_-$, where $\bar{\eta}_-$ and $\bar{\eta}_+$ are time averages of $\eta_t$ pre and post intervention, respectively. Because $\delta = \theta_1 - \theta_0$, I apply the rearrangement test to the least squares estimates $\hat{\theta}_{0,1}, \dots, \hat{\theta}_{0,q}$ and $\hat{\theta}_1$ of $\theta_0$ and $\theta_1$, respectively. I view \eqref{eq:twfe} as coming from individual-level data aggregated to the cluster level with a fixed number of time periods. The estimates $\hat{\theta}_1, \hat{\theta}_{0,1}, \dots, \hat{\theta}_{0,q}$ should therefore be approximately normal for the rearrangement test to apply. To test deviations from this assumption in finite samples, I also consider a situation where the innovations $V_{t,k}$ in \eqref{eq:twfe_ar1} are $\chi^2_2/2$ variables centered at zero. These innovations are asymmetric but still have unit variance.

\begin{figure}
\centering
\resizebox{\textwidth}{!}{
\input{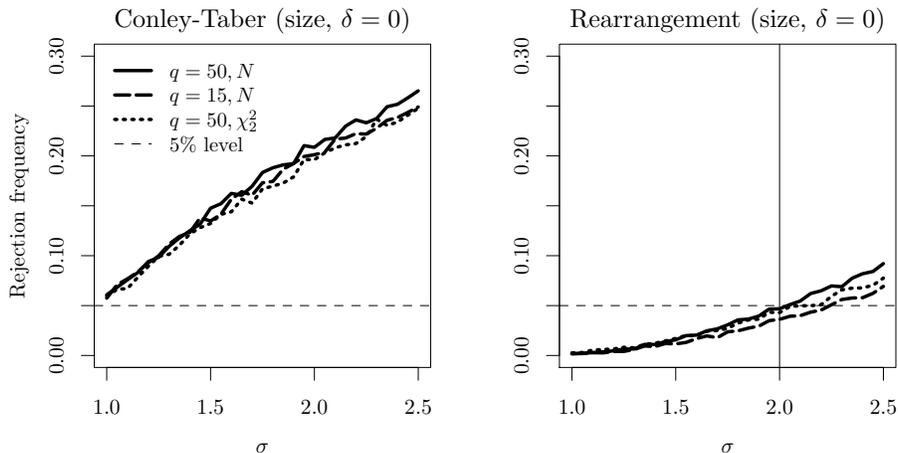}
}	
\caption{Rejection frequencies of a true null as a function of the heterogeneity $\sigma$ for the \citetalias{conleytaber2011} test (left) and the rearrangement test (right) with (i)~$q=50$ control clusters and normal errors (solid lines), (ii)~$q=15$ and normal errors (long-dashed), and (iii)~$q=50$ and chi-squared errors (dotted). The short-dashed line equals $.05$. The rearrangement test uses $\varrho = 2$ (vertical line).}\label{f:fig2}
\end{figure}

Figure~\ref{f:fig2} shows the rejection frequencies of a true null hypothesis $H_0 \colon \delta = 0$ as a function of $\sigma \in \{1, 1.05, 1.1, \dots, 2.5\} $ for the two tests at the 5\% level (short-dashed lines). The assumptions of the \citetalias{conleytaber2011} test (left) hold as $q\to \infty$ when $\sigma = 1$ but are violated at any sample size as soon as $\sigma > 1$. The rearrangement test (right) here uses $\varrho = 2$ (vertical line). The assumptions of the rearrangement test are violated as soon as $\sigma > 2$. The figure shows rejection rates in 10,000 Monte Carlo experiments for each horizontal coordinate with (i)~$q=50$ control clusters (solid lines), (ii)~$q=15$ (long-dashed), and (iii)~$q=50$ but the $V_{t,k}$ are iid copies of a $(\chi_2^2-2)/2$ variable (dotted). Both methods were faced with the same data. As can be seen, the \citetalias{conleytaber2011} test over-rejected slightly at $\sigma=1$ but quickly became unusable as $\sigma$ increased. It exceeded a 10\% rejection rate at about $\sigma = 1.25$. At $\sigma = 2.5$, the \citetalias{conleytaber2011} test falsely discovered a nonzero effect in about 25\% of all cases. In contrast, the rearrangement test was able to reject at or below the nominal level of the test as long as $\sigma \leq \varrho$. For $\sigma > \varrho$, the rearrangement test eventually started to over-reject. It performed worst at $\sigma = 2.5$, where it rejected in 6.9-9.2\% of all cases.

I also conducted a large number of additional experiments under the null. I considered (not shown) other distributions for $V_{t,k}$ and other values of the AR(1) coefficient $\gamma$, the number of time periods, the number of post-intervention periods, and the number of control clusters. However, I found that these changes had little impact on the results in the preceding paragraph. The \citetalias{conleytaber2011} test performed well when there was no heterogeneity but over-rejected wildly otherwise. More results in this direction can be found in \citet{canayetal2014}, who come to the same conclusion in their experiments. The rearrangement test continued to be highly robust to heterogeneity as long as $\varrho$ was not chosen to be much too small.

\begin{figure}
\centering
\resizebox{\textwidth}{!}{
\input{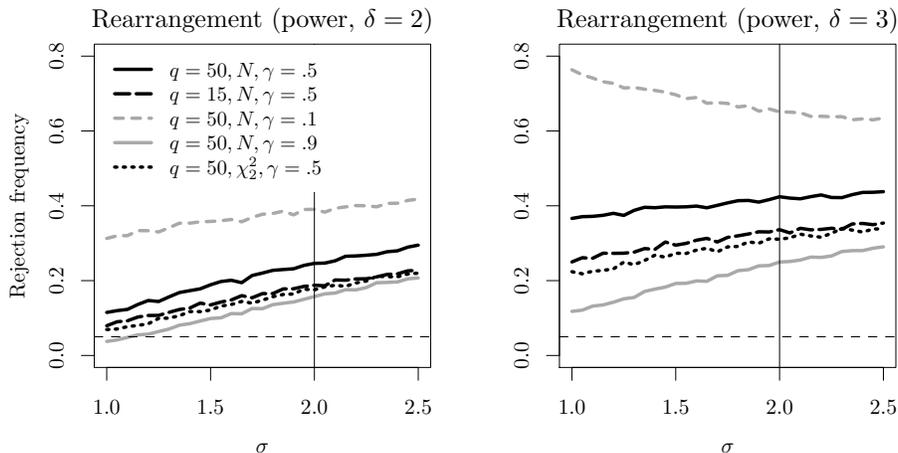}
}	
\caption{Rejection frequencies of the rearrangement test ($\varrho = 2$) under the alternative as a function of the heterogeneity $\sigma$ at $\delta = 2$ (left) and $\delta = 3$ (right) with (i) and (ii) as in Figure \ref{f:fig2}, (iii)~is (i) with weak time dependence $\gamma=.1$ (short-dashed grey), (iv)~is (i) with strong time dependence $\gamma=.9$ (solid grey) (v)~is (i) with chi-squared errors (dotted). The short-dashed line equals $.05$.}\label{f:fig3}
\end{figure}

I now turn to the performance of the rearrangement test under the alternative. The behavior of the \citetalias{conleytaber2011} test under the alternative is not discussed due to its massive size distortion. I consider the same models as before together with some variations mentioned in the preceding paragraph but use nonzero $\delta$.  Figure \ref{f:fig3} shows the results with $\delta = 2$ (left) and $\delta = 3$ (right). The base model is again model (i) with $q=50$ control clusters, standard normal $V_{t,k}$, and time dependence set to $\gamma = .5$ (solid lines). The other models deviate from (i) in the following ways: (ii)~uses $q=15$ (long-dashed), (iii)~lowers the time dependence to $\gamma = .1$ (short-dashed grey), (iv)~increases the time dependence to $\gamma = .9$ (solid grey), and (v) changes the innovations to $(\chi_2^2-2)/2$ (dotted). As can be seen, having to guard against near arbitrary heterogeneity of unknown form made it difficult to detect a relatively small treatment effect (left) when the number of control clusters was low, the distribution of the innovations was non-normal, or the treatment effect was obfuscated by strong time dependence. However, the rearrangement test reliably detected smaller treatment effects when the time dependence was relatively weak. Increasing the treatment effect (right) improved detection rates substantially and uniformly across models, with strong time dependence again being the most challenging situation. The rearrangement test now had considerable power even when only 15 control clusters were available, the innovations were asymmetric, or the time dependence was not extreme. Power was very high when there was little time dependence.  

Figures \ref{f:fig2} and \ref{f:fig3} also illustrate two noteworthy aspects of the rearrangement test: (1) The inequality the rearrangement is based on is nearly tight (as discussed below equation \eqref{eq:testfunalpha}) in the sense that it cannot be meaningfully be improved upon unless $q$ is very small. This can be seen in the right panel of Figure \ref{f:fig2}, where the rejection rate of the test was essentially at or slightly below nominal level when $\sigma = \varrho$. (2) Rejection rates under the null hypothesis increase with $\sigma$ but this does not necessarily translate into increased rejection rates under the alternative for large $\sigma$. This is seen in the right panel of Figure \ref{f:fig3}, where the power decreases with $\sigma$ in the presence of weak time dependence ($\gamma = .1$). \sqed
\end{example}

\begin{example}[Health insurance and labor supply; \citealp{garthwaiteetal2014}]\label{ex:tenncare}
In this example, I use the rearrangement test to reanalyze the results of \citet{garthwaiteetal2014}. They use a difference-in-differences design to study the effects of a large-scale disruption of public heath insurance on labor supply. Their design exploits that in 2005 approximately 170,000 adults in Tennessee (roughly 4\% of the state's non-elderly, adult population) abruptly lost access to TennCare, the state's public health insurance system. \citeauthor{garthwaiteetal2014}\ use data from the 2001-2008 March Current Population Survey to determine health insurance and work status for the years 2000-2007. The comparison groups for Tennessee are the 16 other Southern states\footnote{The Southern states are Alabama, Arkansas, Delaware, the District of Columbia, Florida, Georgia, Kentucky, Louisiana, Maryland, Mississippi, North Carolina, Oklahoma, Tennessee, Texas, Virginia, South Carolina, and West Virginia.} defined by the U.S.\ Census Bureau. 

The main treatment effect in \citet[their $\beta$ in their equation (1)]{garthwaiteetal2014}\ can be estimated as $\delta$ in \[ Y_{t, k} = \theta_0 I_t + \delta I_t D_k + \zeta_k + U_{t,k}, \] where $Y_{t,k}$ is a state-by-year mean of an outcome of interest for state $k$ in year $t$, $I_t = 1\{t\geq 2006\}$ is a post-intervention indicator, and $D_k$ equals one for an observation from Tennessee and equals zero otherwise. There are $17\times 8 = 136$ state-by-year means in total. \citeauthor{garthwaiteetal2014}\ estimate the model in the preceding display by least squares and conduct inference about $\delta$ with bootstrap standard errors that are compared to Student $t$ critical values with 16 degrees of freedom. Their preferred bootstrap first draws states with replacement and then draws individuals within those states with replacement. This type of inference accounts for autocorrelation within individuals over time but generally requires the number of clusters to be infinite for the asymptotics. This bootstrap also does not account for potential dependence within states. 

\begin{table}\caption{Effects of TennCare disenrollment in \citet[Table II.A]{garthwaiteetal2014} with their auto-correlation robust bootstrap standard errors (top) and the largest $\varrho^2$ at which a rearrangement test robust to arbitrary correlation within states and over time still detects an effect (bottom).}\label{t:ggn}
{\centering
\scalebox{.92}{
\begin{tabular}{cp{0cm}cccccc}	
\hline
& & (1) & (2) & (3) & (4) & (5) & (6) 
\\ \cline{3-8}
& & 			& 			&Employed 	 &Employed 		 &Employed 		&Employed\\
& &Has public 	& 			&working & working 	 & working 	& working\\
& &health 		& 			&$<$20 hours&$\geq$20 hours&20-35 hours	&$\geq$35 hours \\
& &insurance 	&Employed 	&per week	 &per week	 	 &per week	 	&per week \\ \cline{3-8}
$\hat{\delta}$ &&$-$0.046\phantom{$-$} &0.025 &$-$0.001\phantom{$-$} &0.026 &0.001 &0.025 \\
s.e.&&(0.010) &(0.011) &(0.004) &(0.010) &(0.007) &(0.011)\\
$p$-val.&&[0.000] &[0.019] &[0.621] &[0.011] &[0.453] &[0.020]
\\ \\
& &\multicolumn{6}{c}{Rearrangement test: largest $\varrho^2$ at which $H_0\colon \delta = 0$ is rejected}\\
 $\alpha$ & &\multicolumn{6}{c}{(``$\times$'' indicates that $H_0\colon \delta = 0$ cannot be rejected for any $\varrho \geq 0$)} \\ \cline{1-1}\cline{3-8}
$.10$ & &5.434 &1.793 &$\times$ &2.208  &$\times$   &$\times$ \\
$.05$ & &2.914 &0.972 &$\times$ &1.195   &$\times$   &$\times$ \\ \hline
\end{tabular}	
}}
\end{table}

I replicate the findings of \citet{garthwaiteetal2014} in the top panel of Table~\ref{t:ggn}. They estimate the causal effect of the TennCare disenrollment on the probability of (1) having public health insurance, (2) being employed, and (3)-(6) being employed for a certain number of hours per week. I show their bootstrap standard errors in parentheses but report one-sided $p$-values in brackets instead of their two-sided $p$-values. In (1) the alternative is a negative effect, for (2)-(6) the alternative is positive. \citeauthor{garthwaiteetal2014}\ find a highly significant 4.6 percentage point decrease for (1) and mostly significant positive effects for (2)-(6). They document an approximately 2.5 percentage point increase in employment and find the same effect if the outcome is restricted to individuals working more than 20 hours or more than 35 hours a week. All three effects are significant at the 5\% level. The inference in \citeauthor{garthwaiteetal2014}\ shows no significant effect for individuals working less than 20 hours or 20-35 hours. 

I now apply the rearrangement test as a robustness check. I view each state over time as a single cluster and run 17 separate least squares regressions of the form 
\begin{align*}
Y_{t, k} &= \theta_0 I_t + \zeta_k + U_{t,k}, \qquad 1\leq k\leq 16,\\
Y_{t, k} &= \theta_1 I_t + \zeta_k + U_{t,k}, \qquad k = 17,
\end{align*}
to obtain $\hat{\theta}_{0,k}$ ($1\leq k\leq 16$) from each of the Southern states except Tennessee and $\hat{\theta}_1$ from Tennessee ($k=17$). Note that the $\zeta_k$ are now the constant terms in each regression. To perform the robustness check, I start with $\varrho = 0$ and increase $\varrho$ by $.001$ in Algorithm~\ref{a:test} as long as the null hypothesis $H_0\colon\delta = 0$ is still rejected. The bottom panel of Table~\ref{t:ggn} shows the largest feasible value of $\varrho^2$ for outcomes (1)-(6). At the 10\% level, the result in (1) survives an up to 5.4 times larger variance in the estimate from Tennessee relative to the second-least variable control cluster estimate. The result in (2) holds if Tennessee has a 1.8 times larger variance and (4) holds even with an up to 2.2 times larger variance. At the 5\% level, these three results remain valid with smaller $\varrho^2$ but the result in (2) only survives if the estimate from Tennessee is at most slightly less variable than the second-least variable control cluster estimate. The results in (3) and (5) confirm findings in \citet{garthwaiteetal2014} in that they are not significant at any level and for any value of $\varrho$.

A noteworthy situation occurs in (6), where the rearrangement test disagrees sharply with the significant effect found by \citet{garthwaiteetal2014}. The rearrangement test finds no effect at any significance level and for any $\varrho$. In contrast, the effects in (2) and (6) are not only essentially identical but also have identical standard errors. (The $p$-values differ slightly because of rounding.) This also illustrates that the rearrangement test differs fundamentally from inference based on $t$ statistics and resampling. 

In sum, the rearrangement test robustly confirms---with one exception---the results of \citet{garthwaiteetal2014}. There is statistical evidence of increased employment concentrated among individuals working at least 20 hours per week even if one accounts for arbitrary dependence within states and over time. The results hold up to substantial heterogeneity across clusters even if the number of clusters is treated a fixed for the analysis. It is also worth noting that $\varrho$ only restricts heterogeneity in one direction. All of the results presented here are robust to arbitrary heterogeneity in any other direction and to Tennessee being infinitely more variable than the least variable control cluster. \sqed


\end{example}

\section{Conclusion}\label{s:conc}
I introduce a generic method for inference about a scalar parameter in research designs with a finite number of large, heterogeneous clusters where only a single cluster received treatment. This situation is commonplace in difference-in-differences estimation but the test developed here applies more generally. I show that the test asymptotically controls size and has power in a setting where the number of observations within each cluster is large but the number of clusters is fixed. The test combines independent, approximately Gaussian parameter estimates from each cluster with a weighting scheme and a rearrangement procedure to obtain its critical values. The weights needed for most empirically relevant situations are tabulated in the paper. The critical values are computationally simple and do not require simulation or resampling. The test is highly robust to situations where some clusters are much more variable than others. Examples and an empirical application are provided. 

\appendix

\section{Proofs}
\begin{proof}[Proof of Theorem \ref{t:size}]
Choose any $\lambda\in\Lambda$ and $w\in (0,1)$. Let $S(X, w) = S = (S_1,\dots,S_{q+2})$. By continuity, we have $T(S) = T(S^\smx)$ if and only if $S_1 + S_2= S_{(q+2)} + S_{(q+1)}$ and $\sum_{k=1}^{q} S_{k+2} = \sum_{k=1}^{q} S_{(k)}$ almost surely. Conclude that \[\ev_{\lambda, 0} \varphi (X,w) =  \prob_{\lambda, 0} \Bigl(\min\{ (1+w)(X_1 - \bar{X}_0), (1-w)(X_1 - \bar{X}_0)\} > \max_k (X_{0,k}-\bar{X}_0) \Bigr).\] 

Because of the centering, we can without loss of generality assume $\mu_0 = 0$. Define $X_{1,1} = (1+w)X_1$ and $X_{1,2} = (1-w)X_1$. Use monotonicity of maximum and minimum to express the right-hand side of the preceding display as $\prob_{\lambda, 0}(\min\{ X_{1,1} - w\bar{X}_0, X_{1,2} + w\bar{X}_0 \} > X_{0,(q)}).$ Let $s^2 = \sum_{k=1}^q \sigma_k^2$ and 
denote by $\tilde{\varphi}(X,w)$ an infeasible version of the test function $\varphi (X,w)$ that replaces $\bar{X}_0$ by $\mu_0$.
The inequality $|1\{ a > b \} - 1\{ c > b \}|\leq 1\{ |a-b| \leq |a-c| \} $ for $a,b,c\in\mathbb{R}$ and the triangle inequality then imply that for every $t > 0$ \[ \sup_{\lambda\in\Lambda} \bigl|\ev_{\lambda, 0} \varphi (X, w)1\{ |\bar{X}_0| \leq st \} - \ev_{\lambda, 0} \tilde{\varphi} (X, w)1\{ |\bar{X}_0| \leq s t \}  \bigr| \] cannot exceed 
\begin{align*} 
\sup_{\lambda\in\Lambda} \prob_{\lambda, 0} \bigl(|X_{1,(1)} - X_{0,(q)}|\leq | \min\{ X_{1,1} - w\bar{X}_0, X_{1,2} + w\bar{X}_0\} - X_{1,(1)}|, |\bar{X}_0| \leq st\bigr).
\end{align*}
By monotonicity, this is at most $\sup_{\lambda\in\Lambda} \prob_{\lambda, 0} (|X_{1,(1)} - X_{0,(q)}| \leq w s t)$. Note that $X_{1,(1)}$ is negatively skewed and $X_{0,(q)}$ positively skewed. Because $X_{1,(1)}$ and $X_{0,(q)}$ are independent, $\prob_{\lambda, 0} (|X_{1,(1)} - X_{0,(q)}| \leq w s t)$ is largest when $X_{1,(1)}$ has the least skew. This happens at $\sigma = 0$ and implies \[ \sup_{\lambda\in\Lambda}\prob_{\lambda, 0} (|X_{1,(1)} - X_{0,(q)}|  \leq ws t) = \sup_{\lambda\in\Lambda}\prob_{\lambda, 0} (|X_{0,(q)}|  \leq w s t). \]
The probability on the right is the supremum of $\prod_{k=1}^q \Phi( w s t/\sigma_k) - \prod_{k=1}^q \Phi(-w s t/\sigma_k)$ over $\lambda\in\Lambda$.
Because $s/\sigma_k$ is decreasing in $\sigma_k$, the entire expression must be decreasing in $\sigma_k$ and the supremum in the preceding display is therefore attained at $\sigma_1 = \dots = \sigma_{q-1} = \ubar{\sigma}$ and $\sigma_q = 0$. Conclude that $\sup_{\lambda \in \Lambda} \prob_{\lambda, 0} (|X_{1,(1)} - X_{0,(q)}| \leq ws t) \leq \Phi(\sqrt{q-1} w t )^{q-1}$. Because
\[ \bigl|\ev_{\lambda, 0} \varphi (X,w)1\{ |\bar{X}_0| > st \} - \ev_{\lambda, 0} \tilde{\varphi} (X, w)1\{ |\bar{X}_0| > s t \}  \bigr| \leq \prob(|\bar{X}_0| >  s t) = 2\Phi(- qt ) \]
and because all bounds so far are valid for every $t$, it follows that \[ \sup_{\lambda\in\Lambda} \bigl|\ev_{\lambda, 0} \varphi (X,w) - \ev_{\lambda, 0} \tilde{\varphi} (X, w)  \bigr| \leq \min_{t > 0} \Bigl ( \Phi\bigl( \sqrt{q-1} w t \bigr)^{q-1} + 2\Phi(- qt ) \Bigr).\]
 
Now consider $\ev_{\lambda, 0} \tilde{\varphi} (X, w) = \prob_{\lambda, 0} (X_{1,(1)} > X_{0,(q)} )$, which can be expressed as
\begin{align*}
 \prob\bigl((1-w)X_1 > X_{0,(q)}, X_1 > 0\bigr) + \prob\bigl((1+w)X_1 > X_{0,(q)}, X_1 < 0\bigr).
\end{align*}
The second term on the right is at most $\prob(X_{0,(q)} < 0, Y < 0) = \Phi(0)^{q+1} = 2^{-q-1}$. Use independence to write the first term of the preceding display as 
\begin{align*}
\int_{0}^{\infty} \prod_{k=1}^q \Phi\biggl(\frac{(1-w)  \sigma y}{\sigma_k}\biggr) \phi(y) dy \leq \int_{0}^{\infty} \Phi\biggl(\frac{(1-w)  \bar{\sigma} y}{\ubar{\sigma}}\biggr)^{q-1} \phi(y) dy,
\end{align*}
where the inequality follows because the the integrand is increasing in $\sigma$, decreasing in $\sigma_k$, and at most one $\sigma_k$ can be arbitrarily close to zero. 
Combine the bounds on $\ev_{\lambda, 0} \tilde{\varphi} (X,w)$ and $\ev_{\lambda, 0} \varphi (X,w) - \ev_{\lambda,0} \tilde{\varphi} (X,w)$ to obtain the bound $\xi_q$.

Now consider the alternative. We still have \[\ev_{\lambda, \delta} \varphi (X, w) =  \prob_{\lambda, \delta} \Bigl(\min\{ (1+w)(X_1 - \bar{X}_0), (1-w)(X_1 - \bar{X}_0)\} > \max_k (X_{0,k}-\bar{X}_0) \Bigr).\] Because $1\{\min\{ (1+w)(X_1 - \bar{X}_0), (1-w)(X_1 - \bar{X}_0)\} > \max_k (X_{0,k}-\bar{X}_0)\}\to 1$ almost surely as $\delta \to \infty$ for $w\in (0, 1)$, dominated convergence implies $\ev_{\lambda,\delta} \varphi (X,w)\to 1$. At $w=1$,  $\min\{ 2(X_1 - \bar{X}_0), 0\} - \max_k (X_{0,k}-\bar{X}_0) \to - \max_k (X_{0,k}-\bar{X}_0)$ almost surely as $\delta \to \infty$. This limit has a continuous distribution function at $0$. At $w=1$, the Slutsky lemma implies that the preceding display converges to $\prob(0 > \max_k (X_{0,k}-\bar{X}_0)) = \prob (\bar{X}_0 > \max_k X_{0,k}) = 0$, as required.
\end{proof}

\begin{proof}[Proof of Proposition \ref{p:lowerbound}]
Let $A_t  = \bigcap_{k=1}^q \{ -t < X_{0,k} \leq t \}$ for some $t > 0$. As above, assume without loss of generality that $\mu_0 = 0$ and recall that $\ev_{\lambda, \delta} \varphi (X,w) =  \prob_{\lambda, \delta} (\min\{ X_{1,1} - w\bar{X}_0, X_{1,2} + w\bar{X}_0 \} > X_{0,(q)})$. For every fixed $t$, this is strictly larger than
\[\prob \bigl(\min\{ X_{1,1} - w\bar{X}_0, X_{1,2} + w\bar{X}_0 \} > X_{0,(q)}, A_t \bigr)\geq \prob \bigl(\min\{ X_{1,1}, X_{1,2}\} - w t > t, A_t \bigr)\] because $X_{0,(q)}\leq t$ and $|\bar{X}_0|\leq t$. By independence and because $t > 0$, the display can be expressed as \[ \prob_{\lambda, \delta} \biggl(X_1 > \frac{1+w}{1-w}t\biggr)\prob_{\lambda}(A_t) = \prob_{\lambda, \delta} \biggl(X_1 > \frac{1+w}{1-w}t\biggr) \prod_{k=1}^q\bigl(\Phi(t/\sigma_k) - \Phi(-t/\sigma_k)\bigr). \]
By symmetry, this simplifies to
\[ \Phi\Biggl(\biggl(\frac{1+w}{1-w}t - \delta\biggr)/\sigma\Biggr) 2^q\prod_{k=1}^q\bigl(\Phi(t/\sigma_k) - 0.5\bigr) \]
and, because $t$ was arbitrary, it must be true that \[ \ev_{\lambda, \delta} \varphi (X,w) \geq  2^q\sup_{t \geq 0}\Phi\Biggl(\biggl(\delta - \frac{1+w}{1-w}t\biggr)/\sigma\Biggr)\prod_{k=1}^q\bigl(\Phi(t/\sigma_k) - 0.5\bigr). \]
Replace $t$ by $t\sigma$ to obtain the bound in the proposition.

The quantity inside the supremum is continuous on $[0,\infty]$, equals zero at $t=0$ and $t=\infty$, and is strictly positive on $t\in (0,1)$. The space $[0,\infty]$ with the order topology is compact and the supremum must therefore be attained on $t\in (0,\infty)$ to not contradict the extreme value theorem. The supremum in the preceding display is therefore a maximum over $t \in (0,\infty)$ for every fixed $\delta \in [0,\infty)$ and the maximized function is a continuous function of $\delta$ on $[0,\infty]$ by the Berge maximum theorem. As $\delta \to \infty$, the supremum is attained at $t = \infty$ and the right-hand side of the display equals one.
\end{proof}

\begin{proof}[Proof of Proposition \ref{t:approx}]
Let $S(X_n, w) = S_n = (S_{1,n}, \dots, S_{q+2, n})$. We cannot have \[ \min\{S_{1,n}, S_{2,n}\} < \max\{S_{3,n},\dots, S_{q+2,n}\} \] and $T(S_n) = T(S_n^\smx)$ at the same time. Moreover, the reverse inequality implies $T(S_n) = T(S_n^\smx)$. Conclude that 
\begin{align*}
\ev \varphi (X_n, w) &=  \prob\bigl( \min\{S_{1,n}, S_{2,n}\} > \max\{S_{3,n},\dots, S_{q+2,n}\}\bigr)\\
&\qquad + \prob\bigl(T(S_n) = T(S_n^\smx), \min\{S_{1,n}, S_{2,n}\} = \max\{S_{3,n},\dots, S_{q+2,n}\}\bigr).
\end{align*}
By the assumed weak convergence and the continuous mapping theorem, we have $S(X_n, w) \leadsto S(X, w) = (S_1,\dots, S_{q+2})$. Use the continuous mapping theorem again to deduce \[\min\{S_{1,n}, S_{2,n}\} - \max\{S_{3,n},\dots, S_{q+2,n}\}\leadsto \min\{S_{1}, S_{2}\} - \max\{S_{3},\dots, S_{q+2}\}.\] The right-hand side can be expressed as \[ h_{X_{0,1},\dots, X_{0,q}}(X_1) \coloneqq \min\{ (1+w)(X_1 - \bar{X}_0), (1-w)(X_1 - \bar{X}_0)\} - \max_k (X_{0,k}-\bar{X}_0), \] where $x\mapsto h_{X_{0,1},\dots, X_{0,q}}(x)$ is strictly increasing and continuous for almost every realization of $X_{0,1},\dots, X_{0,q}$ and therefore has a strictly increasing and continuous inverse $h_{X_{0,1},\dots, X_{0,q}}^{-1}$ almost everywhere. Independence implies that the distribution function of the preceding display equals $x\mapsto \ev \Phi(h_{X_{0,1},\dots, X_{0,q}}^{-1}(x)/\sigma)$, which is continuous by dominated convergence. Conclude that $h_{X_{0,1},\dots, X_{0,q}}(X_1)$ must have a continuous distribution function at $0$ so that \[ \prob (\min\{S_{1,n}, S_{2,n}\} - \max\{S_{3,n},\dots, S_{q+2,n}\} > 0)   \to \ev \varphi (X,w)\] and $\prob(\min\{S_{1,n}, S_{2,n}\} - \max\{S_{3,n},\dots, S_{q+2,n}\} = 0) \to 0$. Combine these two results to obtain $\ev \varphi (X_n,w)\to \ev \varphi (X,w) + 0$, as desired.
\end{proof}

\begin{proof}[Proof of Theorem \ref{t:consistency}]
Let $X_{1,n} = \sqrt{n}(\hat{\theta}_{1} - \theta_1)$ and  $X_{0,k,n} = \sqrt{n}(\hat{\theta}_{0,k} - \theta_0)$ for $1\leq k\leq q$. By assumption, $X_n = (X_{1,n}, X_{0,1,n}, \dots, X_{0,q,n})\leadsto X$. Because $x\mapsto \varphi_\alpha(x)$ is invariant to multiplication of $x$ with positive constants, we have $\varphi_\alpha(\hat{\theta}_n) = \varphi_\alpha(X_n)$ if $\theta_1 = \theta_0$. By Proposition \ref{t:approx} and Theorem \ref{t:size}, this implies $\ev \varphi_\alpha(\hat{\theta}_n)\to \ev \varphi_\alpha(X) \leq \alpha$ under the null hypothesis.

Suppose $\theta_1 = \theta_0 + \delta/\sqrt{n}$. Let $x \mapsto S_\alpha(x) = S(x, w_q(\alpha,\varrho))$ and $\Delta = (\delta1\{k = 1\})_{1\leq k\leq {q+1}}$. By the assumed continuity and the Slutsky lemma, we have $X_n + \Delta \wto{\theta} X + \Delta$. Because $\sqrt{n}S_\alpha(\hat{\theta}_n) = S_\alpha(X_n+\Delta)$ and $\varphi_\alpha$ is invariant to scaling of $S$ by positive constants, it follows from Proposition \ref{t:approx} that $\ev \varphi_\alpha(\hat{\theta}_n)$ that $\ev \varphi_\alpha(\hat{\theta}_n) = \ev \varphi_\alpha(X_n + \Delta) \to \ev \varphi_\alpha(X + \Delta)$, to which the lower bound developed in Proposition \ref{p:lowerbound} can be applied.

Now suppose $\delta = \theta_1 - \theta_0 > 0$. Let $\bar{X}_{0,n} = q^{-1}\sum_{k=1}^q X_{0,k,n}$. Because $X_n/\sqrt{n}\leadsto 0$, the continuous mapping theorem implies that \[ \min\{ (1+w)(X_{1,n} + \delta -\bar{X}_{0,n}),(1-w)(X_{1,n} + \delta-\bar{X}_{0,n}) \} - \max_k (X_{0,k,n} - \bar{X}_{0,n}) \] divided by $\sqrt{n}$ converges weakly to $\min\{ (1+w)\delta,(1-w)\delta \}$. Because zero is a continuity point of the distribution of this degenerate variable unless $\delta = 0$, conclude that $\ev \varphi_\alpha(\hat{\theta}_n)\to 1$ by the same arguments as in Propsition \ref{t:approx}.
\end{proof}

\bibliographystyle{chicago}
\bibliography{qspec.bib}

\end{document}